\theoremstyle{plain}
\newtheorem{theorem}{Theorem}
  \numberwithin{theorem}{section}
		\newtheorem{lemma}[theorem]{Lemma}
		\newtheorem{corollary}[theorem]{Corollary}
		\newtheorem{proposition}[theorem]{Proposition}
		\newtheorem{exercise}[theorem]{Exercise}
		\newtheorem{definition}[theorem]{Definition}
		\newtheorem{conjecture}[theorem]{Conjecture}
		\newtheorem{observation}[theorem]{Observation}
		\theoremstyle{definition}
		\newtheorem{example}[theorem]{Example}
		\theoremstyle{remark}
		\newtheorem{note}[theorem]{Note}
		\newtheorem*{note*}{Note}
		\newtheorem{remark}[theorem]{Remark}
		\newtheorem*{remark*}{Remark}
		\theoremstyle{claimstyle}
		\newtheorem{claim}[theorem]{Claim}
		\newtheorem*{claim*}{Claim}
		\newaliascnt{lemma}{theorem}
		\newtheorem{lemma}[lemma]{Lemma}
		\newaliascnt{corollary}{theorem}
		\newtheorem{corollary}[corollary]{Corollary}
		\newaliascnt{proposition}{theorem}
		\newtheorem{proposition}[proposition]{Proposition}
		\newaliascnt{exercise}{theorem}
		\newtheorem{exercise}[exercise]{Exercise}
		\newaliascnt{definition}{theorem}
		\newtheorem{definition}[definition]{Definition}
		\newaliascnt{conjecture}{theorem}
		\newtheorem{conjecture}[conjecture]{Conjecture}
		\newaliascnt{observation}{theorem}
		\newtheorem{observation}[observation]{Observation}
		\theoremstyle{definition}
		\newaliascnt{example}{theorem}
		\newtheorem{example}[example]{Example}
		\theoremstyle{remark}
		\newaliascnt{note}{theorem}
		\newtheorem{note}[note]{Note}
		\newtheorem*{note*}{Note}
		\newaliascnt{remark}{theorem}
		\newtheorem{remark}[remark]{Remark}
		\newtheorem*{remark*}{Remark}
		\theoremstyle{claimstyle}
		\newaliascnt{claim}{theorem}
		\newtheorem{claim}[claim]{Claim}
		\newtheorem*{claim*}{Claim}
			\numberwithin{theorem}{section}
			\numberwithin{lemma}{section}
			\numberwithin{corollary}{section}
			\numberwithin{proposition}{section}
			\numberwithin{exercise}{section}
			\numberwithin{definition}{section}
			\numberwithin{conjecture}{section}
			\numberwithin{observation}{section}
			\numberwithin{example}{section}
			\numberwithin{note}{section}
			\numberwithin{remark}{section}
			\numberwithin{claim}{section}
	\newtheorem{lemma}[theorem]{Lemma}
	\theoremstyle{definition}
	\theoremstyle{remark}
	\newtheorem*{note*}{Note}
	\newtheorem*{remark*}{Remark}
	\theoremstyle{claimstyle}
	\newtheorem*{claim*}{Claim}
\theoremstyle{plain}
\title{Job Scheduling under Base and Additional Fees, with Applications to Mixed-Criticality Scheduling}
\author{
  Yi-Ting Hsieh \\
  Institute of Data Science and Engineering\\
  National Yang-Ming Chiao-Tung University, Taiwan \\
  \texttt{xuryp891009.cs12@nycu.edu.tw} \\
  \And
  Mong-Jen Kao \\
  Department of Computer Science\\
  National Yang-Ming Chiao-Tung University, Taiwan\\
  0000-0002-7238-3093 \\
  \texttt{mjkao@nycu.edu.tw} \\
  \And
  Jhong-Yun Liu \\
  Institute of Computer Science and Engineering\\
  National Yang-Ming Chiao-Tung University, Taiwan\\
  \texttt{johnliu.cs12@nycu.edu.tw} \\
  \And
  Hung-Lung Wang \\
  Department of Computer Science\\ and Information Engineering\\
  National Taiwan Normal University, Taipei, Taiwan\\
  0000-0001-6156-2734\\
  \texttt{hlwang@ntnu.edu.tw} \\
}
\begin{document}
\maketitle
\begin{abstract}
We are concerned with the problem of scheduling $n$ jobs onto $m$ identical machines. Each machine has to be in operation for a prescribed time, and the objective is to minimize the total machine working time. Precisely, let $c_i$ be the prescribed time for machine $i$, where $i\in[m]$, and $p_j$ be the processing time for job $j$, where $j\in[n]$. The problem asks for a schedule $\sigma\colon\, J\to M$ such that $\sum_{i=1}^m\max\{c_i, \sum_{j\in\sigma^{-1}(i)}p_j\}$ is minimized, where $J$ and $M$ denote the sets of jobs and machines, respectively. We show that First Fit Decreasing (FFD) leads to a $1.5$-approximation, and this problem admits a polynomial-time approximation scheme (PTAS). The idea is further applied to 
mixed-criticality system scheduling to yield improved approximation results. 
\end{abstract}

\section{Introduction} \label{sec:intro}

In many service models, there are two kinds of service charge: a base fee and an additional usage fee. If the customer requests the service within the prescribed time, only the base fee will be charged; otherwise, additional usage fees will apply based on how long the service is extended. 
For example, cloud computing services, cloud storage, car rental, and online courses are all services of this kind. 
In this service model, we are concerned with a scheduling problem that aims at minimizing the service charge, consisting of a base fee and an additional fee if the service is to be extended. 
The given are $n$ jobs and $m$ machines. Each job $j$ has its processing time $p_j$, and every machine needs time $p_j$ to process job $j$. For each machine $i$, there is a base operation time $c_i$. The goal is to assign each job to a machine so that the \textit{total machine working time} is minimized. A machine, say $i$, always lasts for time $c_i$, unless the jobs assigned to $i$, say $j_{i_1},\dots, j_{i_k}$, need more time to finish. In such a case, machine $i$ lasts for time $\sum_{l=1}^{k}p_{j_{i_l}}$. Precisely, let $J$ and $M$ be the sets of jobs and machines, respectively. For any assignment $\sigma\colon\, J\to M$, the objective function is
\[\sum_{i=1}^m \max\left\{c_i, \sum_{j\in \sigma^{-1}(i)}p_j\right\},\]
where $\sigma^{-1}(i) = \{j\in J\colon\, \sigma(j)=i\}$.
We shall refer this problem to as Base and Additional Fee (BAF) Job Scheduling problem. 
The problem is NP-hard, due to a straightforward reduction from the Subset Sum Problem~\cite[A3.2, SP13]{MR519066}. Moreover, BAF is NP-hard in the strong sense~\cite{HanzalekTunysSucha16}, as we will see later in this section.

\smallskip
The BAF job scheduling problem has a ``min-sum'' objective, and there are typical scheduling problems whose objectives are of this kind; e.g., to minimize the sum of tardiness or earliness~\cite{BakerScudder1990,JDJL1990,Emmons1969, KoSt2007}, to minimize the sum of (weighted) job completion times~\cite{LENSTRA1977343,Li2020}, and to minimize the number of late jobs~\cite{Moore1968}. However, none of the formulations reveal the nature of 
the investigated one. 

\paragraph*{Non-Preemptive Mixed-Criticality Scheduling}

One closely related problem of BAF job scheduling is scheduling in \textit{non-preemptive mixed-criticality uni-processor systems with match-up}~\cite{HanzalekTunysSucha16,NOVAK2019687}. 
In this model, the execution of a job comes with different criticality levels, which is interpreted as different states of emergency.
When the execution of a job enters a higher criticality level, it takes a longer period of time to complete and meanwhile prevents jobs with lower criticality levels from being executed.
Due to this setting, the jobs are referred to as \emph{F-shaped} in the literature~\cite{HanzalekTunysSucha16,NOVAK2019687}.
The general objective in this scenario is to compute a uni-processor scheduling of the jobs, such that the supposed execution of each job at all criticality levels are aligned and the worst case job completion time is minimized.
See Figure~\ref{fig-mix-cri-1} for an illustration of such a system.
Notice that the supposed execution of low-criticality jobs may be discarded when certain preceding job enters a higher criticality level.

\begin{figure*}[h]
\centering
\includegraphics[scale=0.86]{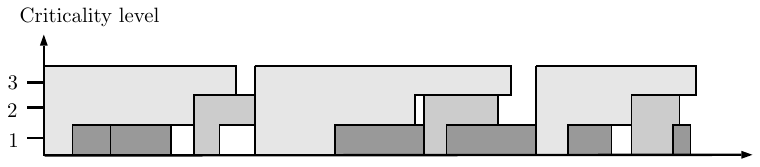}
\caption{An illustration of a schedule for a system with three criticality levels.}
\label{fig-mix-cri-1}
\end{figure*}

Let $L$ denote the maximum criticality level (height) of the jobs.
Hanz\'{a}lek et al.~\cite{HanzalekTunysSucha16} showed that the problem is strongly NP-hard, even for the case with $L=2$. 
Later on, Novak~et~al.~\cite{NOVAK2019687} proposed a $2$-approximation for $L=2$, which generalizes to an $L$-approximation for $L \ge 3$. 

\smallskip

We note that, the proposed BAF scheduling problem can be formulated as a mixed-criticality scheduling problem with two criticality levels proved in and vice versa, by treating a machine in $M$ as a high-criticality job and a job in $J$ as a low-criticality one.
Hence, the BAF job scheduling problem inherits the strongly NP-hardness of two-level mixed-criticality scheduling proved in~\cite{HanzalekTunysSucha16}.

\smallskip

Different job models in mixed-criticality systems have been considered in the literature.
See~\cite{BrunsDavis2017} for a comprehensive survey.
In particular, when the processing time of any job is equal to its criticality level, this model is referred as \emph{isosceles right-angle triangle scheduling}~\cite{BALOGH2024106718,durr2016}. 
For this problem, D\"{u}rr~et~al.~\cite{durr2016} proposed a quasi-polynomial-time approximation scheme (QPTAS), showing that this variant is not APX-complete. They also showed that the \textit{Best-Fit Decreasing} (BFD) strategy leads to an approximation algorithm with a ratio between $1.05$ and $1.5$. Recently, Balogh et al.~\cite{BALOGH2024106718} improved the lower bound on the ratio to be $1.27$. Similar results were proposed for arranging circles~\cite{ABCCKKS_2018}.

\paragraph*{Further Related Works}

In addition to the objective of minimizing worst-case job completion time, the analysis of schedulability has been conducted when each job comes with a release time and a deadline in real-time systems, see, e.g.,~\cite{Vestal07,BaruahLS10,BaruahBDLMMS12} and~\cite{BrunsDavis2017} for a comprehensive survey.

For the BAF job scheduling problem, regardless of the objective function, problems with similar settings have been studied previously. 
For example, the \textit{capacitated scheduling problem}, in which each machine has a capacity, indicating the number of jobs allowed to be executed on that machine~\cite{YANG2003449,LinKlaus2016}. 
The \textit{multiple subset sum problem}~
\cite{CapraraKellerer2000} and, more generally, the \textit{multiple knapsack problem}~\cite{CAPRARA2000111,ChekuriKhanna2005,Jansen2007,Jansen2012} 
also belong to the category of job scheduling problems with constraints on the working durations. 
Notice that for the problems mentioned above, the ``capacities'' associated on the machines or bins serve as constraints. In contrast, the base operation times in our problem are parts of the objective function. 
\paragraph*{Our Results and Discussion}
Our main result focuses on the approximability of the Base and Additional Fee (BAF) Job Scheduling problem.
We show that
\begin{enumerate}
    \item the \textit{First Fit Decreasing} (FFD) strategy leads to a $1.5$-approximation, and
    \item this problem admits a polynomial-time approximation scheme (PTAS).
\end{enumerate}

We note that, one classic technique to obtaining an approximation scheme for problems of this nature is to round down the processing time of the jobs, schedule the ``big jobs'' separately, and then utilizes the property of ``small jobs'' to fill them in~\cite{durr2016,DBLP:journals/combinatorica/VegaL81}.

\smallskip

While this approach does lead to a quasi-PTAS for our considered problem, it does not improve further to a PTAS.
One of the main issues is the lacking of a proper upper-bound on the capacity of the machines, which makes the complexity of the big jobs too high to handle in polynomial-time.
To overcome this issue, we divide the jobs into groups of geometrically growing sizes and create appropriate ``gaps'' between them, so that consecutive groups can be handled relatively independently in polynomial-time while ensuring a small error in the guarantee.
Since this problem is strongly NP-hard, a PTAS is the best approximation result one can expect in polynomial-time.

\smallskip

Our results for the BAF job scheduling problem translate to mixed-criticality scheduling scheduling with $L=2$ since the two problems are equivalent in formulations.
For mixed-criticality scheduling with $L \ge 3$, we apply the PTAS algorithm presented for BAF scheduling to obtain an asymptotic $L(1+\epsilon)/2$-approximation.
\paragraph*{Organization of this Paper}
In~\autoref{sec:prob_def} we provide formal problem definition and notations. 
In~\autoref{sec:greedy} and~\autoref{sec:approx-schemes} we present our $3/2$-approximation and PTAS for the BAF job scheduling problem.
We describe the application to mixed-criticality system scheduling in~\autoref{sec:multi-case}.
Due to page limit, we provide omitted proofs in Section~\ref{sec-appendix-proof} in the appendix for further reference.
\section{Problem Definition and Notations} \label{sec:prob_def}

In the base and addition fee (BAF) job scheduling problem, the given is a 4-tuple $(J, M, p, c)$, where $J$ is the set of jobs, $M$ is the set of machines, $p$ is a processing time function for $J$, and $c$ is a base operation function for $M$.
We assume that $p_j\in \mathbb{R}_{>0}$ and $c_i\in \mathbb{R}_{>0}$ for all $j\in J$ and $i\in M$.
Given an instance $\Pi=(J,M,p,c)$, a schedule for $\Pi$ is a mapping $\sigma\colon\, J \to M$. 
The working time of a machine $i \in M$ in the schedule 
$\sigma$ is defined to be $T_{\Pi}(\sigma,i) := \max\{ \sum_{j\in \sigma^{-1}(i)} p_j, \; c_i \}$, where $\sigma^{-1}(i)$ denotes the set of jobs that are assigned to $i$.
The goal of this problem is to compute a schedule $\sigma$ such that the overall machine working time, $T_{\Pi}(\sigma) := \sum_{i \in M} T_{\Pi}(\sigma,i)$, is minimized. 

\smallskip
For succinctness we write $T(\sigma, i)$ and $T(\sigma)$ for $T_{\Pi}(\sigma, i)$ and $T_{\Pi}(\sigma)$, respectively,   when there is no ambiguity from the text. 
Due to the nature of prescribed base operation times, we will refer them to as the \emph{capacities} of the machines in this work.

\begin{figure*}[h]
\centering
\includegraphics[scale=0.86]{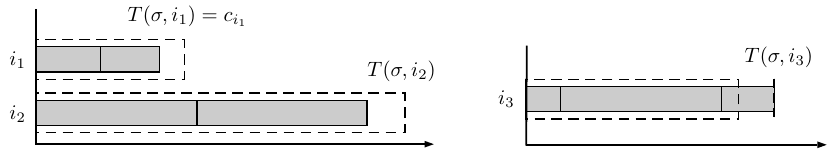}
\caption{An illustration of a BAF job scheduling with three machines. A dashed box represents a machine and its capacity, and a solid box represents a job and its processing time.}
\label{fig-two-layer}
\end{figure*}

\smallskip
In the non-preemptive mixed-criticality match-up scheduling problem, we are given an integer $L \in \mathbb N$ which denotes the maximum criticality level of the system and a set of F-shaped jobs ${\mathcal B}$, where each $B \in {\mathcal B}$ is associated with a height $h(B)$ with $1\le h(B) \le L$ and $h(B)$ layers of non-decreasing processing times 
$p(B,1) \le p(B,2) \le \dots \le p(B,h(B))$.

\smallskip
A schedule of the F-shaped jobs is a mapping $\sigma \colon {\mathcal B} \to {\mathbb R}_{\ge 0}$ where $\sigma(B)$ for each $B \in {\mathcal B}$ denotes the time-index at which the job is scheduled to execute.
The schedule $\sigma$ is feasible if for all layers $1\le i\le L$ and any pair of jobs $B, B' \in {\mathcal B}$ with heights at least $i$, 
i.e., $\min\{ h(B), h(B') \} \ge i$,
one of the following two conditions,
$$\sigma(B') + p(B',i) \; \le \; \sigma(B)  \quad \text{and} \quad \sigma(B) + p(B,i) \; \le \; \sigma(B'),$$
must hold.
Intuitively, we require that in a feasible schedule, the placement of any pair of F-shaped jobs must be disjoint in any layer.
See also Figure~\ref{fig-mix-cri-1} for an illustration.
Let $\sigma$ be a feasible schedule.
The maximum completion time of a job $B \in {\mathcal B}$ is defined to be $T(\sigma,B) := \sigma(B) + p(B,h(B))$ and the makespan of $\sigma$ is then $T(\sigma) := \max_{B \in {\mathcal B}} T(\sigma,B)$.
\section{A 3/2-Approximation Algorithm} \label{sec:greedy}

We warm up showing that an application of the classic \textit{First Fit Decreasing (FFD)} scheduling technique leads to a $3/2$-approximation for the base and additional fee job scheduling problem.
In this algorithm, we prioritize the usage of machines with larger capacities and the scheduling of jobs with larger processing times.

In the main loop, the algorithm considers the machines in non-increasing order of their capacities.
When a machine $i \in M$ is considered, the algorithm schedules the unscheduled jobs in non-increasing order of their processing times onto $i$ until its capacity is fully-utilized or all jobs have been scheduled.
When all the machines are considered, the algorithm schedules the remaining unscheduled jobs arbitrarily, if there is any.
See Algorithm~\ref{alg:greedy} for a detailed description of this algorithm.

\begin{algorithm}
\caption{FFD-Approx$(\Pi = (J, M, p, c))$ }   \label{alg:greedy}
\begin{algorithmic}[1]
    \For {each machine $i \in M$ in non-increasing order of capacities}
        \While{the capacity of $i$ is not fully-utilized and there exists an unassigned job}
            \State Pick the unassigned job $j$ with the largest processing time.
            \State Set $\sigma(j) \gets i$.
        \EndWhile
    \EndFor
    \State Assign all the unassigned jobs arbitrarily, if there is any. \label{algo-greedy-last-step}
    \State \Return $\sigma$
\end{algorithmic}
\end{algorithm}

In the following, we prove the $3/2$-approximation guarantee of Algorithm~\ref{alg:greedy}.
Let $\sigma$ be the output of Algorithm~\ref{alg:greedy} and $\sigma^*$ be an optimal schedule. 
Define 
$$M_a := \left\{ \; i \in M  \; \colon \;  \lvert \sigma^{-1}(i) \rvert = 1 \; \text{ and } \; \sum_{j\in\sigma^{-1}(i)}p_j \ge c_i \; \right\}
\quad \text{and} \quad
M_b := M \setminus M_a.$$
Consider the machines in $M_b$.
The following lemma suggests that it suffices to consider the non-degenerating case for which $T(\sigma,i) < 2\cdot c_i$ for all $i \in M_b$.
We provide the proof in Section~\ref{appendix-sec-lemsetBunderfit} in the appendix for further reference.
\begin{restatable}{lemma}{lemsetBunderfit}{\normalfont (Section~\ref{appendix-sec-lemsetBunderfit}).}
\label{lem:set_B_underfit}
    If $T(\sigma,i) \ge 2\cdot c_i$ for some $i \in M_b$, then $T(\sigma) = T(\sigma^*)$.
\end{restatable}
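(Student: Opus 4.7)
The plan is to exploit the hypothesis $T(\sigma, i) \ge 2c_i$ for some $i \in M_b$ to force the algorithm into a degenerate regime in which \emph{every} machine ends up with processing-time load at least its capacity. Once that is established, $T(\sigma) = \sum_{j\in J} p_j$, which is a trivial lower bound on $T(\sigma^*)$, so optimality of $\sigma$ follows immediately.

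The first step is to observe that $i \in M_b$ together with $T(\sigma, i) \ge 2c_i$ forces $|\sigma^{-1}(i)| \ge 2$: an empty machine gives $T(\sigma,i) = c_i$, and a singleton machine in $M_b$ must, by the definition of $M_a$, carry a single job with $p_j < c_i$, again yielding $T(\sigma,i) = c_i$. Next I would argue that the portion of $\sigma^{-1}(i)$ placed during the main for-loop of Algorithm~\ref{alg:greedy} cannot by itself push the load to $2c_i$. Because jobs are scanned in non-increasing order of processing time, whenever FFD places two or more jobs on $i$, the first (largest) of them must satisfy $p < c_i$; otherwise the while-loop would exit after a single assignment. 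Hence every job FFD places on $i$ has processing time strictly less than $c_i$, and the standard ``last-item overfill'' estimate shows that the main-loop load on $i$ is strictly less than $c_i + c_i = 2c_i$.

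Consequently at least one job assigned to $i$ must have been placed by line~\ref{algo-greedy-last-step}. The crucial observation is that line~\ref{algo-greedy-last-step} actually moves a job only if, for \emph{every} machine, the inner while-loop terminated because the capacity was fully utilized rather than because the unassigned pool was emptied: as soon as some machine's while-loop empties the unassigned pool, all subsequent while-loops exit immediately and line~\ref{algo-greedy-last-step} has nothing to distribute. Therefore after the main for-loop every machine $i' \in M$ already satisfies $\sum_{j \in \sigma^{-1}(i')} p_j \ge c_{i'}$, and the arbitrary placements of line~\ref{algo-greedy-last-step} can only increase these sums. Summing over $M$ gives
$$T(\sigma) \;=\; \sum_{i' \in M} \sum_{j \in \sigma^{-1}(i')} p_j \;=\; \sum_{j \in J} p_j,$$
and since $\max\{c_{i'}, x\} \ge x$ yields $T(\sigma^*) \ge \sum_{j \in J} p_j$ for any schedule, optimality of $\sigma^*$ forces $T(\sigma) = T(\sigma^*)$.

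I expect the only nontrivial point to be the if-and-only-if characterization of when line~\ref{algo-greedy-last-step} is invoked; in particular, pinning down that the only way unassigned jobs can survive the main for-loop is for every machine's while-loop to have terminated by capacity saturation. Once that bookkeeping is in place, the arithmetic collapses into the desired conclusion.
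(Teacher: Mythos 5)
Your proof follows essentially the same route as the paper: both show that the final arbitrary-assignment step (Step~\ref{algo-greedy-last-step}) must have been invoked, deduce that every machine is therefore capacity-saturated after the main for-loop, and conclude $T(\sigma) = \sum_{j\in J} p_j \le T(\sigma^*)$, so that optimality of $\sigma^*$ forces equality. The only point of divergence is the sub-argument that Step~\ref{algo-greedy-last-step} must place a job on $i$: the paper takes $j$ to be the last job scheduled on $i$, uses $p_{j'} \ge p_j$ for the other jobs on $i$ to derive $\sum_{j'\ne j} p_{j'} \ge c_i$, and concludes $j$ was placed by Step~\ref{algo-greedy-last-step}; you instead bound the while-loop contribution to the load on $i$ by $2c_i$.

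There is a small gap in your version of that sub-argument. The estimate ``while-loop load on $i$ is $< 2c_i$'' is only justified in your proposal when the while loop places two or more jobs on $i$ (so that the first one satisfies $p < c_i$ and the last-item-overfill bound applies). If the while loop places exactly one job on $i$ and that single job has processing time $\ge 2c_i$, the while-loop load is already $\ge 2c_i$, and the displayed estimate fails — so the ``consequently at least one job on $i$ came from Step~\ref{algo-greedy-last-step}'' step does not follow from what you wrote. The intended conclusion is still true in that scenario, because your earlier observation $|\sigma^{-1}(i)| \ge 2$ (which you established from $i\in M_b$) already forces a second job on $i$, and since the while loop has terminated on $i$ that second job must come from Step~\ref{algo-greedy-last-step}. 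So the gap is entirely local and the patch uses a fact you already have in hand, but the case split as written is incomplete.
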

The following main technical lemma proves the optimality of the scheduling decisions for the jobs that are assigned to $M_a$. 
We provide the proof in Section~\ref{appendix-sec-lemgreedysetA} for further reference.
\begin{restatable}{lemma}{lemgreedysetA}{\normalfont (Section~\ref{appendix-sec-lemgreedysetA}).} \label{lem:greedy_set_A}
For any schedule $\sigma'$ of the jobs, there always exists $\sigma''$ such that $T(\sigma'')  \le  T(\sigma')$ and 
$\sigma''^{-1}(i) = \sigma^{-1} (i)$ for all $i \in M_a$.
\end{restatable}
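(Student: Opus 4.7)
The plan is to proceed by induction on $|M_a|$. The base case $|M_a|=0$ is immediate with $\sigma''=\sigma'$. For the inductive step I pick $i_r \in M_a$ with the smallest capacity and let $j_r^*$ denote the single job in $\sigma^{-1}(i_r)$, so that $p_{j_r^*} \ge c_{i_r}$ by the definition of $M_a$. The induction reduces to the following one-step claim: for any $\sigma'$, there exists a schedule $\tau$ with $\tau^{-1}(i_r) = \{j_r^*\}$ and $T(\tau) \le T(\sigma')$. Applying the inductive hypothesis to $\tau$ on the sub-instance obtained by deleting $i_r$ and $j_r^*$ — on which the FFD output agrees with $\sigma$ on $M \setminus \{i_r\}$, since removing them does not affect FFD's processing of the remaining machines — then yields the full conclusion.

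To construct $\tau$ from $\sigma'$, let $m := \sigma'(j_r^*)$ and $S := \sigma'^{-1}(i_r)$. When $m = i_r$, I evacuate $S \setminus \{j_r^*\}$ to any other single machine: since $p_{j_r^*} \ge c_{i_r}$ puts $i_r$ in the linear regime of $\max(c_{i_r},\cdot)$, the cost drop on $i_r$ exactly equals the evacuated processing time, which by the piecewise-linear structure of $\max(c,\cdot)$ upper-bounds the induced increase at the target, so $T$ does not grow. When $m \ne i_r$, I try the natural two-machine exchange that swaps $j_r^*$ with $S$ between $m$ and $i_r$. A case analysis on the load $\sum_{j \in \sigma'^{-1}(m)} p_j$ relative to $c_m$ and on $\sum_{j \in S} p_j$ relative to $c_{i_r}$ verifies that this exchange is cost-non-increasing in every configuration except one corner: $c_m > c_{i_r}$, $\sum_{j \in S} p_j < c_{i_r}$, and the excess load of $m$ above $c_m$ is strictly less than the penalty $p_{j_r^*} - c_{i_r}$.

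In this corner the two-machine swap strictly raises $T$, so I fall back on a multi-machine chain rearrangement that, in addition to bringing $j_r^*$ onto $i_r$, reroutes a compensating job from some other overloaded machine in $\sigma'$ to absorb the incoming load on $m$. The existence of such a compensating job should follow from a global accounting of excess load: the corner conditions coupled with feasibility of $\sigma'$ force $\sigma'$ to carry enough overload on $M \setminus \{i_r\}$ to exceed the unavoidable penalty on $i_r$, and this surplus can be repurposed. The choice of $i_r$ as the smallest-capacity machine of $M_a$ is essential here: it minimizes the penalty $p_{j_r^*} - c_{i_r}$ and, together with the FFD ordering $p_{j_1^*} \ge \cdots \ge p_{j_r^*}$, keeps the rearrangement local rather than cascading through $M_b$.

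The main obstacle I anticipate is closing out the corner case: producing the chain move explicitly and bounding the induced cost changes. This requires a careful amortization pairing the penalty incurred on $i_r$ with the overload slack in $\sigma'$, and leverages the structural property of FFD that the $M_a$ capacities and the corresponding $A$-processing times are both monotone non-increasing in the FFD processing order.
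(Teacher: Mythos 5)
Your proposal shares the paper's overall structure — iteratively transforming $\sigma'$ by local swaps so that one $M_a$ machine at a time matches $\sigma$ — but it has a genuine gap that you yourself flag: the ``corner case'' where the two-machine exchange strictly increases cost is not resolved, and the ``multi-machine chain rearrangement'' you gesture at is never constructed. This is not a routine loose end; it is the hard part of the lemma, and the paper devotes most of its proof (its cases 2b-i and 2b-ii) to exactly this situation.

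For comparison, the paper processes $M_a$ machines in the FFD order (non-increasing capacity, so \emph{largest} first), maintaining as an invariant that machines already processed are never disturbed. When it reaches $i_t$ and the job $j_t := \sigma^{-1}(i_t)$ sits on a higher-capacity machine $i_{t'}$ in the working schedule $\hat{\sigma}$ (the analogue of your corner regime $c_m > c_{i_r}$), it does \emph{not} attempt a direct two-machine swap. Instead it looks at $J_t$, the set of jobs FFD scheduled before $j_t$; by FFD every $j'' \in J_t$ satisfies $p_{j''} \ge p_{j_t}$. If some such $j''$ is scheduled by $\hat{\sigma}$ on a machine with index $\ge t$, the paper swaps $j''$ with $j_t$ — this swap is cost-non-increasing precisely because $p_{j''} \ge p_{j_t}$ and $c_{i_{t'}} \ge c_{i_{t''}}$ — and then one of the earlier, easy cases applies. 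If no such $j''$ exists, then $\hat{\sigma}$ packs all of $J_t$ onto the first $t-1$ machines; combined with the FFD fact that the first $t$ machines are exactly filled by $J_t \cup \{j_t\}$ (so $\sum_{1\le k\le t} T(\sigma, i_k) = \sum_{j\in J_t\cup\{j_t\}} p_j$), the paper can wholesale re-align the assignments of $J_t \cup \{j_t\}$ with $\sigma$ without increasing total cost. That bookkeeping is the ``global accounting'' you anticipate, but it has to be carried out, and it leans on the FFD tightness structure that your write-up does not invoke.

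Two further remarks. First, your choice to induct starting from the \emph{smallest}-capacity machine of $M_a$ inverts the paper's order; it is not obviously wrong, but it means you will land in the hard regime ($c_m > c_{i_r}$) essentially every time, and you lose the paper's convenient framing in terms of ``jobs FFD has already committed,'' which is what makes the compensating job $j''$ easy to locate. Second, in the sub-case where the two-machine exchange does work, your claim that it is ``cost-non-increasing in every configuration except one corner'' needs an explicit computation along the lines of the paper's case 2a: the non-trivial step is rewriting $\max\{a,c_{i_t}\} + b$ as $\max\{a+b,\, c_{i_t}+b\}$ and using $c_{i_{t'}} + p_{j_t} \le c_{i_t} + \sum_{j\in\hat{\sigma}^{-1}(i_{t'})} p_j$, which relies on $p_{j_t} \ge c_{i_t}$. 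As it stands, the proposal is a plan with a correctly identified obstacle, not a proof.
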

By~\autoref{lem:set_B_underfit} and~\autoref{lem:greedy_set_A}, we obtain the following theorem.
\begin{theorem} \label{theo:1.5-approx} 
    Algorithm~\ref{alg:greedy} computes a 3/2-approximation solution in $O((n+m)\log(n+m))$ time, where $n = |J|$ and $m = |M|$.
\end{theorem}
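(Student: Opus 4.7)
The plan is to combine the two preceding lemmas through a case analysis on whether the degenerate situation of \autoref{lem:set_B_underfit} occurs, and then bound the contribution of the machines in $M_b$ tightly enough to get the $3/2$ factor instead of the easy factor $2$. If there exists some $i\in M_b$ with $T(\sigma,i)\ge 2c_i$, then \autoref{lem:set_B_underfit} immediately gives $T(\sigma)=T(\sigma^*)$ and the theorem holds. So from now on I would assume the non-degenerate case, namely $T(\sigma,i)<2c_i$ for every $i\in M_b$.

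I would then feed the optimal schedule $\sigma^*$ into \autoref{lem:greedy_set_A} to obtain a schedule $\sigma^{**}$ with $T(\sigma^{**})\le T(\sigma^*)$ (hence equality, by optimality of $\sigma^*$) and $\sigma^{**-1}(i)=\sigma^{-1}(i)$ for every $i\in M_a$. This identity implies two facts I would exploit: the $M_a$-part of the cost is the same in $\sigma$ and $\sigma^{**}$ (each such machine sees the same singleton job), and the jobs routed to $M_b$ form the same subset of $J$ in both schedules, so the totals $\sum_{i\in M_b} L_i$ and $\sum_{i\in M_b} L_i^{**}$ agree, where $L_i$ and $L_i^{**}$ denote the loads of machine $i$ in $\sigma$ and $\sigma^{**}$. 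It therefore suffices to compare $T_b:=\sum_{i\in M_b}\max\{c_i,L_i\}$ with $T_b^{**}:=\sum_{i\in M_b}\max\{c_i,L_i^{**}\}$.

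For this comparison, the non-degenerate assumption yields a ``near-fit'' inequality $\max\{c_i,L_i\}\le c_i+L_i/2$ for each $i\in M_b$: if $L_i\le c_i$ the LHS is $c_i$, while if $L_i>c_i$ the bound $L_i=T(\sigma,i)<2c_i$ gives $L_i\le c_i+L_i/2$. Summing, $T_b\le \sum_{i\in M_b} c_i+\tfrac{1}{2}\sum_{i\in M_b} L_i$. On the other side, $T_b^{**}\ge\sum_{i\in M_b} c_i$ and $T_b^{**}\ge\sum_{i\in M_b} L_i^{**}=\sum_{i\in M_b} L_i$, so averaging gives $T_b^{**}\ge\tfrac{1}{2}\bigl(\sum_{i\in M_b} c_i+\sum_{i\in M_b} L_i\bigr)$. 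Subtracting yields $T_b-T_b^{**}\le\tfrac{1}{2}\sum_{i\in M_b} c_i\le\tfrac{1}{2}T_b^{**}$, so $T_b\le\tfrac{3}{2}T_b^{**}$, and adding the common $M_a$ contribution gives $T(\sigma)\le\tfrac{3}{2}T(\sigma^{**})=\tfrac{3}{2}T(\sigma^*)$.

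The running time is dominated by the two initial sorts (machines by capacity, jobs by processing time), which cost $O((n+m)\log(n+m))$; the main loop and the final arbitrary-assignment step are linear. The only real conceptual hurdle is the last comparison: the crude estimate $T(\sigma,i)<2c_i$ on its own only delivers a factor $2$, and the trick is to use \emph{both} lower bounds on $T_b^{**}$ (sum of capacities and sum of loads) simultaneously with the refined near-fit inequality to shave the factor down to $3/2$.
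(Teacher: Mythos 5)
Your proof is correct and follows essentially the same route as the paper's: invoke \autoref{lem:greedy_set_A} to align the optimal schedule with $\sigma$ on $M_a$, dispose of the degenerate case via \autoref{lem:set_B_underfit}, and exploit the non-degeneracy bound $T(\sigma,i)<2c_i$ on $M_b$ to get the near-fit inequality. Your final algebraic packaging is a minor variant — you decouple $M_a$ from $M_b$ and average the two lower bounds $T_b^{**}\ge\sum_{i\in M_b}c_i$ and $T_b^{**}\ge\sum_{i\in M_b}L_i$ (the latter using that $\sigma$ and $\sigma^{**}$ route the same job set to $M_b$), whereas the paper bundles $\sum_{i\in M_a}T(\sigma,i)+\sum_{i\in M_b}c_i\le T(\sigma^*)$ with the global bound $\sum_{j\in J}p_j\le T(\sigma^*)$ — but the lemmas invoked and the conceptual ingredients are identical.
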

\begin{proof}
By~\autoref{lem:greedy_set_A}, there exists an optimal schedule $\sigma^*$ such that $\sigma^{*-1}(i) = \sigma^{-1}(i)$ for all $i \in M_a$.
If $T(\sigma, i) \ge 2 \cdot c_i$ for some $i \in M_b$, then $T(\sigma) = T(\sigma^*)$ by~\autoref{lem:set_B_underfit} and we are done.
In the following we assume that $T(\sigma, i) < 2 \cdot c_i$ for all $i \in M_b$.
This implies that
\begin{equation}
    \max\left\{ \; \sum_{j\in \sigma^{-1}(i)} p_j - c_i, \; 0 \; \right\} \; \le \;\; \frac{1}{2} \cdot \sum_{j \in \sigma^{-1}(i)} p_j
    \label{eq:1.5-approx-ieq-1}
\end{equation}
holds for all $i \in M_b$.
It follows that
\begin{align*}
T(\sigma) \; 
& = \; \sum_{i\in M_a}T(\sigma, i) \; + \; \sum_{i\in M_b} T(\sigma, i) \\[1pt]
& = \; \left( \; \sum_{i\in M_a}T(\sigma, i) \; + \; \sum_{i\in M_b} c_i \; \right) \; + \; \sum_{i\in M_b} \max\left\{ \; \sum_{j\in \sigma^{-1}(i)}p_j-c_i, \; 0 \; \right\} \\[3pt]
& \leq \; T(\sigma^*) \; + \; \frac{1}{2} \cdot \sum_{i \in M_b} \sum_{j \in \sigma^{-1}(i)} p_j  \; \le \;  \frac{3}{2} \cdot T(\sigma^{*}),
\end{align*}
where in the second last inequality we apply Inequality~(\ref{eq:1.5-approx-ieq-1}) and in the last inequality we use the fact that 
$ \sum_{i \in M_b} \sum_{j \in \sigma^{-1}(i)} p_j \le \sum_{j \in J} p_j \le T(\sigma^{*}).$
\end{proof}
\paragraph*{A Tight Example for FFD-Approx.}
We provide a tight example for the FFD-Approx algorithm.
Consider the instance $(J, M, p, c)$ where $J=\left\{j_1, j_2\right\}$, $M=\left\{i_1, i_2\right\}$, $p_{j_1} = p_{j_2}= 1 - \epsilon$, $c_{i_1} = c_{i_2} =1$, and $0 < \epsilon < 1$ is an arbitrarily real number.
It is clear that an optimal schedule $\sigma^*$ can be obtained by assigning $j_1$ to $i_1$ and $j_2$ to $i_2$, and the working time of $\sigma^*$ is 
$T(\sigma^*) = \max\left\{ p_{j_1},  c_{i_1}  \right\} + \max\left\{ p_{j_2},  c_{i_2}\right\} = 2$.
On the other hand, following the steps of~\autoref{alg:greedy}, we obtain a schedule $\sigma$ in which both $j_1$ and $j_2$ are assigned to $i_1$, with a total working time of $T(\sigma) = (p_{j_1} + p_{j_2}) + c_{i_2} = 3 - 2\epsilon$.
\section{Approximation Schemes} \label{sec:approx-schemes}

We derive components as building blocks to design a PTAS algorithm for the BAF job scheduling problem.
In the following, we first describe the components.
Then we present the main algorithm.
We provide detailed analyses of the components and the algorithm in the subsequent subsections for further references.

\smallskip
Let $\Pi = (J, M, p, c)$ be an instance of interests.
Our first component is a dynamic programming procedure {\sc DP}$(\Pi)$ that takes as input an instance $\Pi$ and computes an optimal schedule.
We state the guarantee of this procedure as the following lemma and provide the proof in Section~\ref{sec:approx-schemes-procedures} for further reference.
\begin{restatable}{lemma}{lemmadp}
\label{lem:dp}
There is a procedure {\sc DP}$(\Pi = (J, M, p, c))$ that computes an optimal schedule for $\Pi$ in time $m\cdot n^{O(k)}$, where $n = |J|$, $m = |M|$, and $k = \lvert \left\{ p_j \right\}_{j \in J} \rvert$ is the number of different job processing times.
\end{restatable}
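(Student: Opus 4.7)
The plan is to design a dynamic program whose state records how many jobs of each distinct processing time have been assigned to the machines considered so far. The key observation is that the contribution of a single machine $i$ to the objective depends only on $c_i$ together with the multiset of jobs placed on it; since there are only $k$ distinct processing times $\hat{p}_1, \ldots, \hat{p}_k$ in the instance, such a multiset is encoded by a tuple $(b_1, \ldots, b_k) \in \mathbb{Z}_{\ge 0}^k$ in which $b_t$ counts the jobs of processing time $\hat{p}_t$ placed on $i$. Because $b_t \le n$ for every $t$, the total number of per-machine configurations is at most $(n+1)^k = n^{O(k)}$, which is the polynomial that drives the whole argument.

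I would fix an arbitrary ordering $1, 2, \ldots, m$ of the machines and let $n_t$ be the number of jobs in $J$ with processing time $\hat{p}_t$. For $0 \le i \le m$ and $\vec{u} = (u_1, \ldots, u_k)$ with $0 \le u_t \le n_t$, define $F(i, \vec{u})$ to be the minimum value of $\sum_{i'=1}^{i} T(\sigma, i')$ over all partial schedules $\sigma$ that place exactly $u_t$ jobs of type $t$ on machines $1, \ldots, i$, with $F(i, \vec{u}) = +\infty$ if no such partial schedule exists. Set $F(0, \vec{0}) = 0$ and $F(0, \vec{u}) = +\infty$ for $\vec{u} \neq \vec{0}$. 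Enumerating the configuration $\vec{b}$ placed on machine $i$ yields the recurrence
$$F(i, \vec{u}) \;=\; \min_{\vec{0} \,\le\, \vec{b} \,\le\, \vec{u}} \left\{\, F(i-1,\, \vec{u} - \vec{b}) \;+\; \max\Big\{\, c_i,\; \sum_{t=1}^{k} b_t \, \hat{p}_t \,\Big\} \,\right\}.$$
The optimum of $\Pi$ is then $F(m, n_1, \ldots, n_k)$, and an optimal assignment itself is recovered by the usual back-pointer bookkeeping.

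Correctness follows by a routine induction on $i$: every feasible schedule decomposes uniquely into a partial schedule on the first $i-1$ machines together with a configuration on machine $i$, and the recurrence considers all such decompositions. For the running time, the table has at most $(m+1)(n+1)^k$ cells; each transition iterates over at most $(n+1)^k$ tuples $\vec{b}$, and each cost evaluation takes $O(k)$ arithmetic operations. Multiplying these factors yields $m \cdot n^{O(k)}$ total time, matching the claim. I do not anticipate a genuine obstacle here; the only point that deserves a brief check is that reading the answer off the cell $F(m, n_1, \ldots, n_k)$ (rather than any cell in the last row) is what forces every job to be scheduled, so the DP truly optimizes over feasible schedules and not merely over partial assignments.
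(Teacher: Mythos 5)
Your proposal is correct and uses essentially the same dynamic program as the paper: states indexed by machine prefix and a $k$-tuple counting how many jobs of each distinct processing time are already placed, the same per-machine transition over sub-tuples, the same boundary conditions, and the same extraction of the optimum from $F(m, n_1,\ldots,n_k)$, with the same $m\cdot n^{O(k)}$ time analysis.
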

To speed-up the dynamic programming procedure, a classic technique is to properly reduce the complexity of the configurations it requires.
For this, our second component is a round-down procedure {\sc RD}$(J, p, \epsilon)$ that takes as input a job set $J$, a processing time function for $J$, and an error parameter $\epsilon$, and creates a well-structured processing time function $\hat{p}$.

\smallskip
In particular, let $p_{\min}(J) := \min_{j\in J}\{p_j\}$ denote the minimum job processing time.
For each $j \in J$, define
$$\hat{p}_j \; := \;  p_{\min}(J) \cdot (1+\epsilon)^{k_j}, \quad \text{where } k_j := \left\lfloor \; \log_{1+\epsilon} \left( \frac{p_j}{ \; p_{\min}(J) \; } \right) \; \right\rfloor.$$
It follows that $\hat{p}$ satisfies the following three properties:
\begin{enumerate}
    \item 
        $\hat{p}_j \le p_j \le (1+\epsilon) \cdot \hat{p}_j$ for all $j \in J$.
        
    \item 
        $\hat{p}_{\min}(J) = p_{\min}(J)$. 
        
    \item
        There are at most $k_{J} := \lfloor \log_{1+\epsilon}( {p_{\max}(J)} / {p_{\min}(J)} ) \rfloor + 1$ different types of processing times in $\hat{p}$, where $p_{\max}(J) := \max_{j\in J}\{p_j\}$ is the maximum job processing time.

\end{enumerate}
\noindent
Consider the two instances $\Pi = (J, M, p, c)$ and $\hat{\Pi} = (J, M, \hat{p}, c)$ of BAF job scheduling.
By the first property for $\hat{p}$ in the above, we have 
$T_{\Pi}(\hat{\sigma}) \le (1+\epsilon) \cdot T_{\hat{\Pi}}(\hat{\sigma}) \le (1+\epsilon) \cdot T_{\Pi}(\sigma),$
where $\sigma$ and $\hat{\sigma}$ are optimal schedules for $\Pi$ and $\hat{\Pi}$, respectively.
Combining the two components directly yields the following ``almost-QPTAS'' algorithm in~\autoref{alg:almost-qptas}.

\begin{algorithm}
\caption{Almost-QPTAS$(\Pi = (J, M, p, c), \epsilon)$ }   \label{alg:almost-qptas}
\begin{algorithmic}[1]
    \State $\hat{p} \gets \text{\sc RD}(J, p, \epsilon)$.
    \State $\sigma \gets \text{\sc DP}(\hat{\Pi} = (J, M, \hat{p}, c))$.
    \State \Return $\sigma$
\end{algorithmic}
\end{algorithm}

The following lemma summarizes the guarantee for~\autoref{alg:almost-qptas}.
We provide the details in Section~\ref{sec:approx-schemes-procedures} for further reference.
\begin{restatable}{lemma}{lemdpfunction}
\label{lem:dp_function}
\autoref{alg:almost-qptas} computes a $(1+\epsilon)$-approximate solution for $\Pi = (J, M, p, c)$ in time $m\cdot n^{O(k_J)}$, where $n = |J|$, $m = |M|$, and $k_J = O(\log_{1+\epsilon}( {p_{\max}(J)} / {p_{\min}(J)} ))$.
\end{restatable}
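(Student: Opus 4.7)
The plan is to prove the two parts of the lemma --- the approximation guarantee and the running time --- essentially by directly composing the two components, namely~\autoref{lem:dp} and the stated properties of the rounding procedure {\sc RD}. The approximation bound has already been sketched in the paragraph just before the statement; I would simply pin down the missing inequality and turn the sketch into a clean argument.

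For the approximation guarantee, let $\sigma^*$ denote an optimal schedule for $\Pi$ and let $\hat{\sigma}$ denote the output of {\sc DP}$(\hat{\Pi})$, which by \autoref{lem:dp} is an optimal schedule for $\hat{\Pi} = (J, M, \hat{p}, c)$. I would chain two inequalities. The first, $T_{\hat{\Pi}}(\hat{\sigma}) \le T_{\hat{\Pi}}(\sigma^*) \le T_{\Pi}(\sigma^*)$, combines optimality of $\hat{\sigma}$ on $\hat{\Pi}$ with the fact that $\hat{p}_j \le p_j$ for all $j$, which implies machine-by-machine that $T_{\hat{\Pi}}(\tau,i) \le T_{\Pi}(\tau,i)$ for any schedule $\tau$. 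The second inequality, $T_{\Pi}(\hat{\sigma}) \le (1+\epsilon)\cdot T_{\hat{\Pi}}(\hat{\sigma})$, follows machine-by-machine from the elementary identity
\[
\max\bigl\{\, c_i,\; (1+\epsilon)\,x\,\bigr\} \;\le\; (1+\epsilon)\cdot \max\bigl\{\, c_i,\; x\,\bigr\},
\]
applied with $x = \sum_{j \in \hat{\sigma}^{-1}(i)} \hat{p}_j$, together with property~1 of $\hat{p}$, which gives $\sum_{j \in \hat{\sigma}^{-1}(i)} p_j \le (1+\epsilon)\,x$. Composing the two chains yields $T_{\Pi}(\hat{\sigma}) \le (1+\epsilon)\cdot T_{\Pi}(\sigma^*)$, as required.

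For the running time, the plan is to observe that {\sc RD} can be carried out in $O(n)$ time (each $\hat{p}_j$ requires only one logarithm and floor operation), and that by property~3 of $\hat{p}$ the rounded instance $\hat{\Pi}$ has at most $k_J+1 = O(k_J)$ distinct processing times. Invoking~\autoref{lem:dp} on $\hat{\Pi}$ with $k = O(k_J)$ then gives a running time of $m \cdot n^{O(k_J)}$, which absorbs the $O(n)$ preprocessing.

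The main obstacle --- modest as it is --- is the step $T_{\Pi}(\hat{\sigma}) \le (1+\epsilon)\cdot T_{\hat{\Pi}}(\hat{\sigma})$, because scaling inside a $\max$ with a fixed threshold needs a short case analysis to verify that enlarging only the ``load'' argument by a factor of $(1+\epsilon)$ cannot inflate the $\max$ by more than $(1+\epsilon)$ (the threshold $c_i$ is nonnegative, so the inequality holds in all three cases: $x \ge c_i$, $(1+\epsilon)x \le c_i$, and the mixed case). Once this identity is in hand, the rest of the argument is direct bookkeeping, and the lemma follows.
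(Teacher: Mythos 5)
Your proposal is correct and matches the paper's own proof essentially step for step: the same chain $T_{\Pi}(\hat{\sigma}) \le (1+\epsilon)\,T_{\hat{\Pi}}(\hat{\sigma}) \le (1+\epsilon)\,T_{\hat{\Pi}}(\sigma^*) \le (1+\epsilon)\,T_{\Pi}(\sigma^*)$, justified by property~1 of $\hat{p}$, optimality of the DP output on $\hat{\Pi}$, and $\hat{p}_j \le p_j$, with the running time following from property~3 and \autoref{lem:dp}. Your extra care with the $\max$-scaling identity is a harmless elaboration of what the paper leaves implicit.
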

We note that~\autoref{alg:almost-qptas} can be modified to yield a quasi-polynomial time approximation scheme (QPTAS) by discarding jobs with processing time smaller than $\epsilon / n$ from the DP procedure and scheduling them separately afterwards.
However, this approach does not further improve to a polynomial-time approximation scheme (PTAS). 
To obtain a PTAS, we employ an approach that discards substantially more jobs while creating proper gaps between jobs of different magnitudes in size so that they can be handled separately.
\paragraph*{A PTAS for the Base and Additional Fee Job Scheduling}
In the following we use~\autoref{alg:almost-qptas} as a building block to design a PTAS for the BAF job scheduling problem.
Let $\Psi = (J,M,p,c)$ be the input instance and $0 < \epsilon \le 1/2$ be the target error parameter.
Without loss of generality, we assume that $\max_{j\in J}\{p_j\}=1$.

\smallskip
Let $K := \lceil \frac{1}{\epsilon} \rceil$.
For any $i \ge 0$, define the \textit{layer} 
    $I(i) \; := \; \left\{ \; j\in J \; \colon \; \epsilon^{i+1} < p_j \leq \epsilon^i \; \right\}$
to be the set of jobs whose processing times fall within the $i$th  interval $(\epsilon^{i+1}, \epsilon^i]$.
We partition the family $\{I(0), I(1),\dots\}$ into $K$ equivalence classes $G_0,\dots G_{K-1}$, where $I(i)$ and $I(j)$ belong to the same class if $i\equiv j\pmod{K}$.

\smallskip
Our PTAS aims at creating ``light-enough gaps'' by discarding certain layers. Then, every maximal subfamily of consecutive layers form a \textit{block} to be solved independently. 
In particular, let $t := \arg\min_{0 \le i < K} \sum_{j \in G_i} p_j$ and 
observe that the choice of $t$ satisfies that
$$\sum_{j\in G_t}p_j \; \leq \; (1/K)\cdot \sum_{j\in J}p_j \; \leq \; \epsilon\cdot\sum_{j\in J} p_j.$$ 
We treat each layer in $G_t$ as a gap and discard all of them. Then the set of remaining jobs $J\setminus G_t$ is partitioned into blocks $B_0^{(t)}, B_1^{(t)},\dots$, where 
$$B^{(t)}_0 \; := \; \bigcup_{0\le i' < t} I(i') 
\qquad \text{and} \qquad
B^{(t)}_i \; := \; \bigcup_{0 < i' < K} I \left( (i-1) \cdot K+i'+t \right) \text{ for all } i > 0.$$
We note that, while $B^{(t)}_i$ is defined for all $i$, at most $|J|$ of them can be nonempty.

\begin{algorithm}
\caption{PTAS-Approx$(\Psi = (J, M, p, c), \epsilon)$} \label{alg:ptas-new}
\begin{algorithmic}[1]
    \State $t := \arg\min_{0 \le i < K} \sum_{j \in G_i} p_j$.
    \State $\sigma \gets \emptyset$.
    \For{each $i = 0,1,2,\ldots$ in order such that $B^{(t)}_i \neq \emptyset$}
        \State For each $q \in M$, define $\hat{c}^{(i)}_q := \max\{ c_q - \sum_{j \in \sigma^{-1}(q)} \hat{p}_j, \; 0 \}$.
        \State $\sigma_i \gets \text{Almost-QPTAS}(\Pi_i = (B^{(t)}_i, M, p, \hat{c}^{(i)}), \epsilon)$.  \quad \texttt{// Apply~\autoref{alg:almost-qptas} on $\Pi_i$.}
        \State Set $\sigma(j) \gets \sigma_i(j)$ for all $j \in B^{(t)}_i$. 
    \EndFor
    \State Pick any $i\in M$ and set $\sigma(j) \gets i$ for all $j \in G_t$.
    \State \textbf{return} $\sigma$
\end{algorithmic}
\end{algorithm}

Our algorithm maintains a working schedule $\sigma$, which is initially empty, and considers the nonempty sets among $\{B^{(t)}_i\}_i$ in the order of their indexes.
In each iteration, say, when $B^{(t)}_i$ is considered, the algorithm applies the procedure {Almost-QPTAS}$(\Pi_i = (B^{(t)}_i, M, p, \hat{c}^{(i)}))$ above to obtain a schedule $\sigma_i$ for $\Pi_i$, where $\hat{c}^{(i)}_q$ for each $q \in M$ is defined as 
$$\hat{c}^{(i)}_q \; := \; \max\left\{ \; c_q - \sum_{j \in \sigma^{-1}(q)} \hat{p}_j, \; 0 \; \right\},$$ 
namely, $\hat{c}^{(i)}$ is the residual capacity function subject to the current working schedule.
Then the algorithm merges the assignments in $\sigma_i$ into the working schedule $\sigma$ by setting $\sigma(j)$ to be $\sigma_i(j)$ for all $j \in B^{(t)}_i$ and repeats until all nonempty sets are considered.

\subsection{Analysis of \autoref{alg:ptas-new}}
In this section, we prove the following theorem.
\begin{theorem} \label{theo:ptas}
For any $0<\epsilon\le 1/2$,~\autoref{alg:ptas-new} computes $(1+5\epsilon)$-approximation solution for the BAF Job Scheduling problem in $m \cdot n^{O(\frac{1}{\epsilon} \log_{1+\epsilon}{(\frac{1}{\epsilon})})}$ time, where $n = |J|$ and $m = |M|$.
\end{theorem}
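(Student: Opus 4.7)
My plan is to separate the overall error into three independent sources -- the discarded gap class $G_t$, the rounding inside each Almost-QPTAS call, and the per-block sequential processing -- and to bound each in turn. The running time follows readily from the block structure: for $i\ge 1$, $B^{(t)}_i$ spans at most $K-1$ consecutive layers, and $B^{(t)}_0$ spans the $t\le K-1$ initial layers; in both cases the ratio $p_{\max}/p_{\min}$ within a block is bounded by $\epsilon^{-(K-1)}$, so $k_{B^{(t)}_i}=O(\frac{1}{\epsilon}\log_{1+\epsilon}(\frac{1}{\epsilon}))$. By Lemma~\ref{lem:dp_function}, each Almost-QPTAS call then takes $m\cdot n^{O(\frac{1}{\epsilon}\log_{1+\epsilon}(\frac{1}{\epsilon}))}$ time, and summing over the at most $n$ nonempty blocks gives the stated total.

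For the approximation ratio, let $\sigma^*$ be optimal and $\mathrm{OPT}:=T(\sigma^*)$. Since $\sum_{j\in J}p_j\le\mathrm{OPT}$, the pigeonhole choice of $t$ forces $\sum_{j\in G_t}p_j\le\frac{1}{K}\sum_j p_j\le\epsilon\cdot\mathrm{OPT}$, so assigning $G_t$ to a single machine in the last line adds at most $\epsilon\cdot\mathrm{OPT}$ to the total. For the remaining jobs, let $\sigma^{**}$ be optimal on $(J\setminus G_t,M,p,c)$, so $T(\sigma^{**})\le\mathrm{OPT}$. Writing $y^i_q$ and $x^i_q$ for the loads the algorithm and $\sigma^{**}$ place on machine $q$ from block $B^{(t)}_i$, I would employ the telescoping identity
\[
\sum_q\max\!\Bigl\{c_q,\sum_i\ell^i_q\Bigr\} \;=\; \sum_q c_q \;+\; \sum_i\sum_q\max\!\bigl\{\ell^i_q-\rho^{(i)}_q(\ell),\,0\bigr\},
\]
where $\rho^{(i)}_q(\ell):=\max\{c_q-\sum_{i'<i}\ell^{i'}_q,0\}$, to decompose both schedules' costs into a common base $\sum_q c_q$ plus per-block excesses. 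Lemma~\ref{lem:dp_function} then bounds the algorithm's cost on each subproblem $\Pi_i$ by $(1+\epsilon)$ times the cost of any feasible reference assignment, and taking $\sigma^{**}|_{B^{(t)}_i}$ as that reference couples the algorithm's per-block excess to that of $\sigma^{**}$.

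The main technical obstacle is the \emph{residual mismatch}: the capacities $\hat{c}^{(i)}_q$ used by the algorithm in $\Pi_i$ are derived from its own earlier choices and from rounded processing times $\hat{p}$, not from $\sigma^{**}$'s. I would control this with two structural tools. First, the guarantee $\hat{p}_j\le p_j\le(1+\epsilon)\hat{p}_j$ from the RD procedure bounds the per-block rounding error by $\epsilon$ times the block's total load, which sums to at most $\epsilon\cdot\mathrm{OPT}$. Second, the gap between consecutive blocks ensures that any job in $B^{(t)}_i$ is smaller by a factor of at least $1/\epsilon$ than any job in $B^{(t)}_{i-1}$; this size separation forces the damage caused by an early suboptimal commitment on subsequent blocks to shrink geometrically, so the cumulative residual-mismatch error is bounded by $O(\epsilon)\cdot\mathrm{OPT}$. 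Combining the three contributions yields $T(\sigma)\le(1+5\epsilon)\cdot\mathrm{OPT}$.
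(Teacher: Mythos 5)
Your time-complexity argument is the same as the paper's and is correct. For the approximation ratio, you correctly identify the three error sources (discarded gap class, rounding, residual mismatch), and the gap and rounding bounds are fine, but the crucial third part is not actually argued. You say the size separation between blocks "forces the damage caused by an early suboptimal commitment \ldots\ to shrink geometrically, so the cumulative residual-mismatch error is bounded by $O(\epsilon)\cdot\mathrm{OPT}$," but the geometric decay is only half the accounting. At iteration $i$ the per-machine extra cost is indeed bounded by one small job of size $\le \epsilon^{iK+t+1}$, but a priori the \emph{number} of machines that incur such a penalty in iteration $i$ could be very large, and $|M|$ does not appear in the target bound. Your sketch never controls this multiplicity, and without doing so the total error $\sum_i (\text{small job size})\times(\text{\# affected machines})$ cannot be charged to $\mathrm{OPT}$.

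The paper closes exactly this gap. It constructs a chain of intermediate reference schedules $\hat{\sigma}^*_{-1},\hat{\sigma}^*_0,\hat{\sigma}^*_1,\ldots$ that progressively align the optimal solution with the algorithm's choices, and for each step it shows (Lemma~\ref{lem:ptas-proof-intermediate-schedule-bound}) that a machine can pick up an extra small job only if it is a \emph{relevant} machine that is already carrying workload at least $\epsilon^{iK+t}$ in the optimal schedule (Lemma~\ref{lem:relevant-machine-lower-bound} gives the lower bound via the preceding blocks). So the number of affected machines at iteration $i$ is bounded by $|M^*_i|$, the number of machines with working time $\ge \epsilon^{iK+t}$ in $\hat{\sigma}^*$, and the sum $\sum_i \epsilon^{iK+t+1}|M^*_i|$ telescopes against $\sum_q T_{\hat{\Psi}}(\hat{\sigma}^*,q)=T_{\hat{\Psi}}(\hat{\sigma}^*)$ by a geometric-series computation. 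Your telescoping identity for $\max\{c_q,\sum_i \ell^i_q\}$ is a clean alternative bookkeeping device, but on its own it does not resolve the residual-mismatch problem because the algorithm's residuals $\hat{c}^{(i)}$ and $\sigma^{**}$'s implicit residuals differ; you would still need the machine-counting lemma (or an equivalent charging scheme) to turn your identity into the claimed $O(\epsilon)\cdot\mathrm{OPT}$ bound.
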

First we analyze the time complexity of \autoref{alg:ptas-new}.
For each block $B^{(t)}_i$, we have
$$\frac{ \; p_{\max}(B^{(t)}_i) \; }{p_{\min}(B^{(t)}_i)} \; \le \; \frac{\epsilon^{(i-1) \cdot K+t+1}}{\epsilon^{i\cdot K+t}} = \left( \frac{1}{\epsilon} \right)^{K-1} \leq \left( \frac{1}{\epsilon} \right)^{\frac{1}{\epsilon}}.$$
Note that this holds for all $i \ge 0$.
By \autoref{lem:dp_function}, the time complexity of \autoref{alg:ptas-new} is
$$
\sum_{i\geq0}{ \; |M|\cdot|B^{(t)}_i| ^ {O( \frac{1}{\epsilon} \cdot \log_{1+\epsilon} {\left(\frac{1}{\epsilon} \right)})}} \; = \; m \cdot n ^ {O( \frac{1}{\epsilon} \cdot \log_{1+\epsilon} {\left(\frac{1}{\epsilon} \right)})}.
$$
\smallskip
In the rest of this section we analyze the approximation ratio of~\autoref{alg:ptas-new}.
Let $\sigma^*$ be an optimal schedule for the input instance $\Psi = (J,M,p,c)$.
Consider the iterations of~\autoref{alg:ptas-new} and recall that $\sigma_i$ is the schedule the algorithm computes for the instance $\Pi_i = (B^{(t)}_i, M, p, \hat{c}^{(i)})$ in the $i$th iteration.

\smallskip
For any $i \ge 0$, define the job-accumulated instance 
$\hat{\Psi}_i := \left( \; \bigcup_{0\le k \le i} B^{(t)}_k, M, \hat{p}, c \; \right), $
where $\hat{p}$ denotes the rounded-down processing time function computed for $\{ B^{(t)}_k \}_{0 \le k \le i}$ up to the $i$th iteration.
Let $\hat{\sigma}_i := \bigcup_{0 \le k \le i} \sigma_i$ denote the working schedule the algorithm maintains at the end of the $i$th iteration.
It follows that $\hat{\sigma}_i$ is a feasible solution for $\hat{\Psi}_i$.
Consider in particular the instance $$\hat{\Psi} \; := \; \hat{\Psi}_\infty \; = \; \left( \; J \setminus G_t, \; M, \; \hat{p}, \; c \; \right)$$
and the schedule $\hat{\sigma} := \hat{\sigma}_\infty$ for $\hat{\Psi}$.
Let $\hat{\sigma}^*$ be the restriction of $\sigma^*$ to the job set $J \setminus G_t$, i.e., 
$$\hat{\sigma}^*(j) \; := \; \sigma^*(j) \quad \text{for all $j \in J\setminus G_t$.}$$
Then $\hat{\sigma}^*$ is a feasible schedule for $\hat{\Psi}$.
Since $\hat{\sigma}^*$ only schedules a subset of jobs of that scheduled in $\sigma^*$, it follows that $T(\hat{\sigma}^*) \le T(\sigma^*)$ with respect to both $\Psi$ and $\hat{\Psi}$.

\smallskip
To prove the approximation guarantee of $\sigma$ for $\Psi$, we first bound the total working time of $\hat{\sigma}$ in terms of that of $\hat{\sigma}^*$ with the following lemma.
\begin{lemma} \label{lem:ptas_main_alg_approx}
$T_{\hat{\Psi}}(\hat{\sigma}) \leq 
\left( (1+ {\epsilon})/({1-\epsilon^K}) \right) \cdot T_{\hat{\Psi}}(\hat{\sigma}^*)$, where $T_{\hat{\Psi}}$ denotes the total working time of the given schedule with respect to the instance $\hat{\Psi}$.
\end{lemma}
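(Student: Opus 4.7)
The strategy is to analyze~\autoref{alg:ptas-new} iteration by iteration, combining a telescoping decomposition of the algorithm's working time with the $(1+\epsilon)$-approximation guarantee of Almost-QPTAS (\autoref{lem:dp_function}) and the geometric size-gap between consecutive blocks created by discarding $G_t$.

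First, I would establish a telescoping identity for $T_{\hat{\Psi}}(\hat{\sigma})$. Writing $L^{(i)}_q := \sum_{j \in \hat{\sigma}_{i-1}^{-1}(q)} \hat{p}_j$ for the cumulative load on machine $q$ at the start of iteration $i$ and $\Delta^{(i)}_q := \sum_{j \in \sigma_i^{-1}(q)} \hat{p}_j$ for the load $\sigma_i$ adds in that iteration, a short case analysis on whether $L^{(i)}_q + \Delta^{(i)}_q$ falls above or below $c_q$ yields the per-machine identity $\max\{\hat{c}^{(i)}_q, \Delta^{(i)}_q\} = \Delta^{(i)}_q + \hat{c}^{(i+1)}_q$. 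Summing over $q$ and telescoping over $i$ gives
$$
T_{\hat{\Psi}}(\hat{\sigma}) \;=\; \sum_{i \ge 0} T_{\Pi_i}(\sigma_i) \;-\; \sum_{i \ge 1}\sum_{q \in M} \hat{c}^{(i)}_q.
$$

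Next, I would invoke the per-iteration approximation guarantee. Since $\sigma^*|_{B^{(t)}_i}$ is a feasible schedule for $\Pi_i$,~\autoref{lem:dp_function} gives $T_{\Pi_i}(\sigma_i) \le (1+\epsilon)\cdot T_{\Pi_i}(\sigma^*|_{B^{(t)}_i})$. Performing the analogous telescoping for $\hat{\sigma}^*$ produces a parallel decomposition of $T_{\hat{\Psi}}(\hat{\sigma}^*)$ in terms of the optimal's own residuals $\tilde{c}^{(i)}_q := (c_q - L^{*,(i)}_q)^+$, where $L^{*,(i)}_q := \sum_{k<i} \Delta^{*,(k)}_q$ and $\Delta^{*,(i)}_q := \sum_{j \in B^{(t)}_i \cap (\sigma^*)^{-1}(q)} \hat{p}_j$.

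The main obstacle is to bound the drift between $T_{\Pi_i}(\sigma^*|_{B^{(t)}_i})$, which mixes the algorithm's residuals $\hat{c}^{(i)}_q$ with the optimal's block-$i$ loads $\Delta^{*,(i)}_q$, and the block-$i$ contribution to $T_{\hat{\Psi}}(\hat{\sigma}^*)$, which uses the optimal's own residuals $\tilde{c}^{(i)}_q$. Since the pointwise gap satisfies $|\hat{c}^{(i)}_q - \tilde{c}^{(i)}_q| \le |L^{(i)}_q - L^{*,(i)}_q|$, this drift is controlled by how far the algorithm's earlier placements have deviated from those of $\hat{\sigma}^*$. Here I would exploit the block structure: because the gap layer $I((i-1)K+t) \subseteq G_t$ is discarded, the maximum processing time in $B^{(t)}_i$ is at most $\epsilon^K$ times that in $B^{(t)}_{i-1}$, so the accumulated drift contributions can be charged to a geometric series whose sum absorbs a multiplicative loss of $1/(1-\epsilon^K)$ into the final bound.

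Combining the $(1+\epsilon)$ loss from Almost-QPTAS with the $1/(1-\epsilon^K)$ loss from the drift summation yields $T_{\hat{\Psi}}(\hat{\sigma}) \le \tfrac{1+\epsilon}{1-\epsilon^K}\,T_{\hat{\Psi}}(\hat{\sigma}^*)$, as required.
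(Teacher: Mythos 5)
The telescoping identity you set up for $T_{\hat\Psi}(\hat\sigma)$ is correct, and you correctly identify the crux: the instance $\Pi_i$ seen by Almost-QPTAS is defined with the \emph{algorithm's own} residual capacities $\hat c^{(i)}$, whereas the optimal schedule's per-block contribution is governed by its own residuals $\tilde c^{(i)}$, so one must control the drift between them. But the step where you claim that this drift ``can be charged to a geometric series whose sum absorbs a multiplicative loss of $1/(1-\epsilon^K)$'' is a genuine gap, and I do not think the argument as sketched can be completed.

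Your proposed control is pointwise in per-machine \emph{load}: $|\hat c^{(i)}_q - \tilde c^{(i)}_q| \le |L^{(i)}_q - L^{*,(i)}_q|$. The geometric separation between blocks bounds the \emph{job sizes} in block $i+1$ relative to block $i$, but it says nothing about $\sum_q |L^{(i)}_q - L^{*,(i)}_q|$, which can be as large as the \emph{total} workload already assigned in blocks $<i$ — consider block $0$ containing two unit jobs and two machines of capacity $2$, where the algorithm puts both jobs on one machine and the optimal splits them; then $\sum_q |L^{(1)}_q - L^{*,(1)}_q| = 2$ while the max job size in block $1$ is at most $\epsilon^{K+1}$. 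Nothing forces the load discrepancy to shrink geometrically across iterations; it can grow, and it cannot be bounded by the size of the next block's jobs. This example also exposes the deeper reason the pointwise-load route is the wrong invariant: the discrepancy in loads was large, yet the discrepancy in \emph{working times} was zero — the quantity one actually needs to control is the drift in total working time, not in loads, and these are not linked by the Lipschitz bound you use.

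The paper's proof handles exactly this by a different mechanism. Rather than comparing the algorithm's residual capacities to the optimal's directly, it constructs a \emph{sequence of intermediate schedules} $\hat\sigma^*_{-1} = \hat\sigma^*, \hat\sigma^*_0, \hat\sigma^*_1, \dots$ that progressively adopt the algorithm's assignments of $B^{(t)}_0, B^{(t)}_1, \dots$; Lemma~\ref{lem:ptas-dp-optimality} (which uses the fact that $\hat\sigma_i$ is \emph{exactly optimal} for the rounded sub-instance, not merely $(1+\epsilon)$-approximate — the $(1+\epsilon)$ rounding loss is deferred to Theorem~\ref{theo:ptas}) gives that swapping in $\hat\sigma_i$'s assignment of $B^{(t)}_i$ does not increase total working time, once the small jobs displaced from the relevant machines are temporarily removed. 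The procedure {\sc ScheduleSmall} then reinserts those small jobs, and the key per-iteration bound (Lemma~\ref{lem:ptas-proof-intermediate-schedule-bound}) shows the increase is at most $\epsilon^{iK+t+1}\cdot|M^*_i|$: only machines that end up \emph{not fully utilized} after the swap can overflow, each by at most one small job of size $\le \epsilon^{iK+t+1}$, and Lemma~\ref{lem:relevant-machine-lower-bound} guarantees any such machine belongs to $M^*_i$ (has workload $\ge \epsilon^{iK+t}$ under $\hat\sigma^*$), so the error can be charged against $T_{\hat\Psi}(\hat\sigma^*)$. The geometric sum $\sum_{i'\ge i}\epsilon^{i'K+t+1} = \epsilon^{iK+t+1}/(1-\epsilon^K)$ then appears when accounting for how many iterations each machine is charged in, which is where the $1/(1-\epsilon^K)$ factor genuinely comes from. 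Your proposal is missing this intermediate-schedule construction and the charging to $M^*_i$; without it the drift term is not controlled.
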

To prove~\autoref{lem:ptas_main_alg_approx}, we proceed by considering the iterations of the algorithm.
For each $i = 0, 1, 2, \ldots$ in order, we define an intermediate schedule $\hat{\sigma}^*_i$ from $\hat{\sigma}^*_{i-1}$, where we use $\hat{\sigma}^*_{-1} := \hat{\sigma}^*$ to denote the initial (optimal) schedule, so as to maintain the following two invariant properties during the process:
\begin{enumerate}
    \item 
        $\hat{\sigma}^*_i$ aligns with the assignments made in $\hat{\sigma}_i$, i.e., $\hat{\sigma}^*_i(j) = \hat{\sigma}_i(j)$ for all $j \in \bigcup_{0\le k\le i} B^{(t)}_k$.

    \smallskip
    
    \item 
        The increase of total working times from $\hat{\sigma}^*_{i-1}$ to $\hat{\sigma}^*_i$ is bounded by $\epsilon^{i\cdot K + t+1} \cdot \left| M^*_i \right|$, 
        where
        $$M^*_i \; := \; \left\{ \; q \in M \; \colon \; T_{\hat{\Psi}}(\hat{\sigma}^*, q) \; \ge \; \epsilon^{i\cdot K + t} \; \right\}$$
        denotes the set of machines that contribute a decent amount of workload in $\hat{\sigma}^*$.
        Formally,
        $T_{\hat{\Psi}}(\hat{\sigma}^*_i) \; \le \; T_{\hat{\Psi}}(\hat{\sigma}^*_{i-1}) \; + \; \epsilon^{i\cdot K + t + 1} \cdot \left| M^*_i \right|.$

\end{enumerate}
\medskip
Let $i \ge 0$ be an index of interests.
In the following we describe the construction of $\hat{\sigma}^*_i$ and prove that it satisfies the above two invariant properties.
Let
$$M^{\text{rel}}_i \; := \; \left\{ \; \hat{\sigma}^*_{i-1}(j), \; \hat{\sigma}_i(j) \; \colon \; j \in B^{(t)}_i \; \right\}$$
denote the set of \emph{relevant machines} on which the jobs in $B^{(t)}_i$ are scheduled in $\hat{\sigma}^*_{i-1}$ and $\hat{\sigma}_i$, respectively,
and 
$$J^{(> i)} \;\; := \;\; \bigcup_{k > i} B^{(t)}_k \; \cap \; \bigcup_{\ell \in M^{\text{rel}}_i} \hat{\sigma}^{*-1}_{i-1}(\ell)$$
to be the set of ``small jobs'' collected from $B^{(t)}_{i+1}, B^{(t)}_{i+2}, \ldots$ that are assigned by $\hat{\sigma}^*_{i-1}$ to the relevant machines.
We define the first intermediate schedule $\hat{\sigma}'_i$ as 
$$
\hat{\sigma}'_i(j) \; := 
\begin{cases}
    \; \hat{\sigma}^*_{i-1}(j),  & \text{if $j \in J \setminus \left( \; B^{(t)}_i \cup J^{(> i)} \; \right)$}, 
    \\[3pt]
    \; \hat{\sigma}_i(j), & \text{if $j \in B^{(t)}_i$}.
\end{cases} 
$$
Intuitively, $\hat{\sigma}'_i$ is obtained from $\hat{\sigma}^*_{i-1}$ by reassigning jobs in $B^{(t)}_i$ according to the schedule $\hat{\sigma}_i$ and discarding the assignments for ``the small jobs'' in $J^{(>i)}$.

\smallskip
The following lemma, provided in Section~\ref{appendix-sec-lemptasdpoptimality}, comes from the optimality of $\sigma_i$, and hence also $\hat{\sigma}_i$, for the instance $\hat{\Pi}_i := (B^{(t)}_i, M, \hat{p}, \hat{c}^{(i)})$ defined in~\autoref{alg:almost-qptas} and $\hat{\sigma}'_i$.
\begin{restatable}{lemma}{lemptasdpoptimality}{\normalfont (Section~\ref{appendix-sec-lemptasdpoptimality}).}
\label{lem:ptas-dp-optimality}
    $T_{\hat{\Psi}}(\hat{\sigma}'_i) \; \le \; T_{\hat{\Psi}}(\hat{\sigma}^*_{i-1})$.
\end{restatable}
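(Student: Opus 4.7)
The plan is to prove the inequality by decomposing the total working time machine by machine and reducing to the optimality guarantee of the DP inside Almost-QPTAS. Observe first that $\hat{\sigma}'_i$ and $\hat{\sigma}^*_{i-1}$ differ only on machines in $M^{\text{rel}}_i$: for any $q \notin M^{\text{rel}}_i$, no job of $B^{(t)}_i$ is assigned to $q$ by either $\hat{\sigma}^*_{i-1}$ or $\hat{\sigma}_i$, and by definition $J^{(>i)}$ contains no job placed on $q$, so $\hat{\sigma}'_i(j) = \hat{\sigma}^*_{i-1}(j)$ for every job on $q$. Hence $T_{\hat{\Psi}}(\hat{\sigma}'_i,q) = T_{\hat{\Psi}}(\hat{\sigma}^*_{i-1},q)$ off $M^{\text{rel}}_i$, and the analysis reduces to $q \in M^{\text{rel}}_i$.

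Next, fix $q \in M^{\text{rel}}_i$ and partition the jobs assigned to $q$ by $\hat{\sigma}^*_{i-1}$ into three parts according to the block index: let $L_A(q)$, $L_B(q)$, $L_C(q)$ denote the $\hat{p}$-loads from $A := \bigcup_{0 \le k < i} B^{(t)}_k$, from $B^{(t)}_i$, and from $\bigcup_{k > i} B^{(t)}_k$, respectively. By invariant (1) applied to iteration $i-1$, $\hat{\sigma}^*_{i-1}$ agrees with $\hat{\sigma}_{i-1}$ on $A$, so $L_A(q) = \sum_{j \in \hat{\sigma}^{-1}_{i-1}(q)} \hat{p}_j$ and therefore $\hat{c}^{(i)}_q = \max\{c_q - L_A(q), 0\}$. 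Under $\hat{\sigma}'_i$, the load on $q$ is $L_A(q) + L'_B(q)$, where $L'_B(q)$ is the $\hat{p}$-load from $B^{(t)}_i$ under $\hat{\sigma}_i$, since all jobs of $J^{(>i)}$ have been discarded from $M^{\text{rel}}_i$. Applying the identity
\[
\max\{c_q,\, x + y\} \; = \; \max\{c_q,\, x\} \; + \; \max\{y - \max\{c_q - x,\, 0\},\, 0\}
\]
with $x = L_A(q)$ to both schedules, the common term $\max\{c_q, L_A(q)\}$ cancels, and summing over $M^{\text{rel}}_i$ the desired inequality reduces to
\[
\sum_{q \in M^{\text{rel}}_i} \max\{L'_B(q) - \hat{c}^{(i)}_q,\, 0\} \;\; \le \;\; \sum_{q \in M^{\text{rel}}_i} \max\{L_B(q) + L_C(q) - \hat{c}^{(i)}_q,\, 0\}.
\]

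To close this, I would extend both sums to all $q \in M$: on non-relevant machines $L'_B(q) = L_B(q) = 0$, which only adds non-negative terms to the right. Dropping $L_C(q) \ge 0$ from the right, it suffices to show $\sum_{q \in M} \max\{L'_B(q) - \hat{c}^{(i)}_q, 0\} \le \sum_{q \in M} \max\{L_B(q) - \hat{c}^{(i)}_q, 0\}$, or equivalently, after adding $\sum_q \hat{c}^{(i)}_q$ to both sides, $\sum_q \max\{L'_B(q), \hat{c}^{(i)}_q\} \le \sum_q \max\{L_B(q), \hat{c}^{(i)}_q\}$. The right-hand side is the total working time of the schedule obtained by restricting $\hat{\sigma}^*_{i-1}$ to $B^{(t)}_i$ on the instance $\hat{\Pi}_i := (B^{(t)}_i, M, \hat{p}, \hat{c}^{(i)})$, which is a feasible schedule; the left-hand side is the total working time of $\hat{\sigma}_i$ on the same instance. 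Since Almost-QPTAS applies the DP of \autoref{lem:dp} on the rounded-down instance $\hat{\Pi}_i$, $\hat{\sigma}_i$ is an \emph{exact} optimum for $\hat{\Pi}_i$, and the inequality follows.

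The main conceptual obstacle is the bookkeeping: keeping track of which schedule ($\hat{\sigma}_{i-1}$ or $\hat{\sigma}^*_{i-1}$) each quantity depends on, and relying on invariant (1) to show that the residual capacity $\hat{c}^{(i)}_q$ (defined from $\hat{\sigma}_{i-1}$) equals $\max\{c_q - L_A(q), 0\}$ (defined from $\hat{\sigma}^*_{i-1}$). A secondary subtlety is that the relevant optimality is that of the DP on the rounded-down instance rather than the $(1+\epsilon)$-approximation guarantee of Almost-QPTAS on the original instance; since $T_{\hat{\Psi}}$ itself is already measured under $\hat{p}$, using DP optimality is both valid and tight, and no extra $(1+\epsilon)$ factor appears in this lemma.
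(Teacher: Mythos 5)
Your proof is correct and follows essentially the same route as the paper's: restrict attention to the relevant machines $M^{\text{rel}}_i$, observe that $\hat{\sigma}_i$ is an exact optimum of the DP on the residual instance $(B^{(t)}_i, M, \hat{p}, \hat{c}^{(i)})$, and compare it against the restriction of $\hat{\sigma}^*_{i-1}$, which is feasible for that instance. Your write-up is in fact more careful than the paper's on two points it glosses over — the max-decomposition identity that justifies cancelling the common term $\max\{c_q, L_A(q)\}$, and the use of invariant (1) to identify the algorithm's residual capacity $\hat{c}^{(i)}_q$ with $\max\{c_q - L_A(q), 0\}$ computed from $\hat{\sigma}^*_{i-1}$ — but the underlying argument is the same.
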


\begin{figure}[tp]
Procedure {\sc ScheduleSmall}$(\hat{\sigma}'_i, M^{\text{rel}}_i, J^{(>i)})$
\begin{algorithmic}[1]
    \State $\hat{\sigma}^*_i \gets \hat{\sigma}'_i$.
    \For{each $j \in J^{(>i)}$ in any order}
        \If{some $i \in M^{\text{rel}}_i$ is not yet fully-utilized}
            \State Pick one such $i$ and set $\hat{\sigma}^*_i(j) \gets i$.
        \Else
            \State Pick an arbitrary $i \in M^{\text{rel}}_i$ and set $\hat{\sigma}^*_i(j) \gets i$.
        \EndIf
    \EndFor
    \State \Return $\hat{\sigma}^*_i$.
\end{algorithmic}
\caption{A procedure for scheduling the remaining small jobs in $J^{(>i)}$ on relevant machines.}
\label{alg:ptas-proof-assign-small-jobs}
\end{figure}

Provided $\hat{\sigma}'_i$, we obtain the intermediate schedule $\hat{\sigma}^*_i$ by scheduling the small jobs in $J^{(>i)}$ back to the relevant machines with the procedure {\sc ScheduleSmall} in~\autoref{alg:ptas-proof-assign-small-jobs}.
Also refer to Figure~\ref{fig-ptas-iter} for an illustration.
It is clear that $\hat{\sigma}^*_i$ satisfies the first invariant condition stated in the above.
To establish the second condition, we need the following lemma, which is proved by the definition of relevant machines and the design of the procedure.
We provide the proof in Section~\ref{appendix-sec-lemrelevantmachinelowerbound}.

\begin{figure*}[h]
\centering
\includegraphics[scale=0.86]{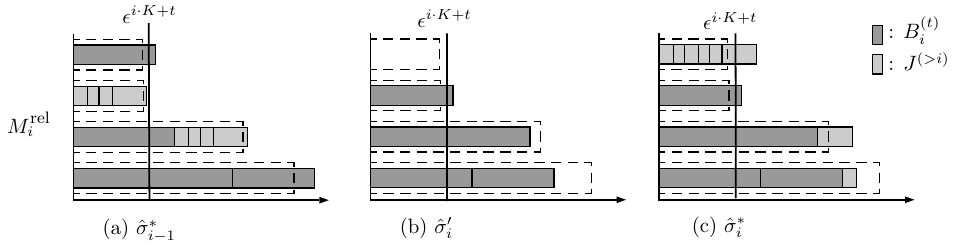}
\caption{An illustration on the construction of $\hat{\sigma}^*_i$ from $\hat{\sigma}^*_{i-1}$ and $\hat{\sigma}'_i$, in particular the schedules on the relevant machines in $M^{\text{rel}}_i$.}
\label{fig-ptas-iter}
\end{figure*}

\begin{restatable}{lemma}{lemrelevantmachinelowerbound}{\normalfont (Section~\ref{appendix-sec-lemrelevantmachinelowerbound}).}
\label{lem:relevant-machine-lower-bound}
For any $i \ge 0$ and any $q \in M$, if $q \in M^{\text{rel}}_i$ is relevant, then there exists $j \in \bigcup_{0\le k\le i} B^{(t)}_k$ such that $\hat{\sigma}'_i(j) = q$ or $\hat{\sigma}^*(j) = q$.
That is, at least one of $\hat{\sigma}'_i$ or $\hat{\sigma}^*$ must schedule some job in $\bigcup_{0\le k\le i} B^{(t)}_k$ on $q$.
\end{restatable}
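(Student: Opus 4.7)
My plan is to induct on $i$, splitting the definition $M^{\text{rel}}_i = \{\hat\sigma^*_{i-1}(j), \hat\sigma_i(j) : j \in B^{(t)}_i\}$ into its two sources and treating each.

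For the direct contribution, if $q = \hat\sigma_i(j)$ for some $j \in B^{(t)}_i$, the definition of $\hat\sigma'_i$ immediately gives $\hat\sigma'_i(j) = \hat\sigma_i(j) = q$, with $j \in B^{(t)}_i \subseteq \bigcup_{0\le k\le i} B^{(t)}_k$. So this case is trivial and yields the witness $j$ itself.

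The substantive case is $q = \hat\sigma^*_{i-1}(j)$ for some $j \in B^{(t)}_i$. Here I would trace the history of $j$'s assignment across the already-executed iterations. If $j$ was never touched by any previous call to \textsc{ScheduleSmall}, then $\hat\sigma^*_{i-1}(j) = \hat\sigma^*(j) = q$ and $j$ itself witnesses the claim. Otherwise, let $k < i$ be the last iteration at which $j$ was reassigned. Since \textsc{ScheduleSmall} at iteration $k$ only places jobs on machines of $M^{\text{rel}}_k$, and no further reassignment of $j$ happens between iteration $k+1$ and $i-1$, we obtain $q = \hat\sigma^*_k(j) \in M^{\text{rel}}_k$. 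Invoking the inductive hypothesis at index $k$ yields some $j' \in \bigcup_{0\le k'\le k} B^{(t)}_{k'}$ with $\hat\sigma^*(j') = q$ or $\hat\sigma'_k(j') = q$; in the former case $j'$ is already a valid witness for the claim at $i$.

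The remaining and, in my view, most subtle step is transferring $\hat\sigma'_k(j') = q$ into $\hat\sigma'_i(j') = q$. The key observation I would use is that for any $j' \in B^{(t)}_{k'}$ with $k' \le k$, both $\hat\sigma'_k(j')$ and $\hat\sigma'_i(j')$ collapse to the common value $\hat\sigma_{k'}(j')$: the definition of $\hat\sigma'_{(\cdot)}$ only modifies the assignments of jobs in $B^{(t)}_{(\cdot)}$ and $J^{(>(\cdot))}$, neither of which contains $j'$ once $k' \le k < i$; and the first invariant maintained by the construction pins $\hat\sigma^*_{k-1}(j')$ and $\hat\sigma^*_{i-1}(j')$ both to $\hat\sigma_{k'}(j')$. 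This ``freezing'' of the assignments of previously-scheduled blocks is the mechanism that closes the induction.
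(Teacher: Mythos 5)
Your proof is correct and is essentially the contrapositive of the paper's argument: the paper assumes that no job of $\bigcup_{0\le k\le i} B^{(t)}_k$ lands on $q$ under $\hat{\sigma}'_i$ or $\hat{\sigma}^*$ and inductively deduces $q \notin M^{\text{rel}}_k$ for all $0\le k\le i$, using exactly the two facts you isolate — that {\sc ScheduleSmall} only places jobs on already-relevant machines, and that the assignments of earlier blocks are frozen across iterations. Your direct strong induction is, if anything, more explicit about the freezing step, which the paper's terse proof leaves implicit.
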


The following main technical lemma establishes the second invariant condition for $\hat{\sigma}^*$.
\begin{lemma} \label{lem:ptas-proof-intermediate-schedule-bound}
$T_{\hat{\Psi}}(\hat{\sigma}^*_i) \; \le \; T_{\hat{\Psi}}(\hat{\sigma}^*_{i-1}) \; + \; \epsilon^{i\cdot K + t + 1} \cdot \left| M^*_i \right|$.
\end{lemma}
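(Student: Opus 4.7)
The plan is to show that the increase $T_{\hat{\Psi}}(\hat{\sigma}^*_i) - T_{\hat{\Psi}}(\hat{\sigma}^*_{i-1})$ is concentrated on the relevant machines $M^{\text{rel}}_i$ and then bound it by dissecting the procedure \textsc{ScheduleSmall}. First I would observe that $\hat{\sigma}^*_i$ and $\hat{\sigma}^*_{i-1}$ agree on every $q \notin M^{\text{rel}}_i$ (no job from $B^{(t)}_i$ or $J^{(>i)}$ is moved onto or off such a machine), so only relevant machines contribute to the difference. Combined with $T_{\hat{\Psi}}(\hat{\sigma}'_i) \le T_{\hat{\Psi}}(\hat{\sigma}^*_{i-1})$ from \autoref{lem:ptas-dp-optimality}, it suffices to bound the total increase from $\hat{\sigma}'_i$ to $\hat{\sigma}^*_i$ on $M^{\text{rel}}_i$. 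I would split the analysis into two cases according to whether the \textbf{else}-branch of \textsc{ScheduleSmall} is ever executed.

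In the case where the \textbf{else}-branch never triggers, every job $j \in J^{(>i)}$ is placed on some machine whose current load is strictly less than its capacity. For each $q \in M^{\text{rel}}_i$, the working-time contribution on $q$ is unchanged unless the load on $q$ crosses $c_q$ during the procedure, in which case the resulting overflow is strictly less than the processing time of the job that caused the transition. Since every $j \in J^{(>i)}$ lies in $B^{(t)}_k$ for some $k > i$, we have $\hat{p}_j \le \epsilon^{i \cdot K + t + 1}$, so each per-machine increase is at most $\epsilon^{i \cdot K + t + 1}$ and the total increase is at most $\epsilon^{i \cdot K + t + 1}$ times the number of such ``flip'' machines. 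To finish this case I would argue that every flip machine lies in $M^*_i$ by invoking \autoref{lem:relevant-machine-lower-bound}: it supplies a witness $j^{\star} \in \bigcup_{k \le i} B^{(t)}_k$ with $\hat{p}_{j^{\star}} > \epsilon^{i \cdot K + t}$ assigned to $q$ either by $\hat{\sigma}^*$ or by $\hat{\sigma}'_i$. If by $\hat{\sigma}^*$, then $T_{\hat{\Psi}}(\hat{\sigma}^*, q) \ge \hat{p}_{j^{\star}} > \epsilon^{i \cdot K + t}$ directly; if only by $\hat{\sigma}'_i$, then $\hat{p}_{j^{\star}}$ contributes to the pre-procedure load on $q$, and the flip condition (pre-procedure load strictly below $c_q$) then forces $c_q > \epsilon^{i \cdot K + t}$, giving $T_{\hat{\Psi}}(\hat{\sigma}^*, q) \ge c_q > \epsilon^{i \cdot K + t}$ all the same.

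In the complementary case where the \textbf{else}-branch triggers at least once, by that moment every machine in $M^{\text{rel}}_i$ must have load $\ge c_q$, and since loads only grow thereafter the same remains true at termination of \textsc{ScheduleSmall}. Hence $T_{\hat{\Psi}}(\hat{\sigma}^*_i, M^{\text{rel}}_i)$ equals the total load placed on $M^{\text{rel}}_i$ by $\hat{\sigma}^*_i$. I would then verify by direct accounting---using Invariant~1 at step $i-1$ for blocks below $i$, the fact that $\hat{\sigma}_i$ schedules every job of $B^{(t)}_i$ on relevant machines, and the definition of $J^{(>i)}$---that this total load coincides with $\sum_{q \in M^{\text{rel}}_i} L_q^*$, the total load assigned to $M^{\text{rel}}_i$ by $\hat{\sigma}^*_{i-1}$. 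Since $T_{\hat{\Psi}}(\hat{\sigma}^*_{i-1}, M^{\text{rel}}_i) \ge \sum_{q \in M^{\text{rel}}_i} L_q^*$, this gives $T_{\hat{\Psi}}(\hat{\sigma}^*_i) \le T_{\hat{\Psi}}(\hat{\sigma}^*_{i-1})$, which trivially satisfies the claimed inequality.

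The main obstacle will be the charging argument in the first case. \autoref{lem:relevant-machine-lower-bound} only certifies that a witness job of size exceeding $\epsilon^{i \cdot K + t}$ lies on $q$ in \emph{one} of $\hat{\sigma}^*$ or $\hat{\sigma}'_i$; when the witness lives only in $\hat{\sigma}'_i$, the conclusion $q \in M^*_i$ is not immediate, and the flip condition (pre-procedure load strictly below $c_q$) is precisely what allows the large-processing-time bound on the witness to be transferred onto $c_q$ itself and thereby certify membership in $M^*_i$.
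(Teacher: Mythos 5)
Your proof is correct and follows essentially the same route as the paper's: reduce to bounding the increase from $\hat{\sigma}'_i$ to $\hat{\sigma}^*_i$ via \autoref{lem:ptas-dp-optimality}, bound the per-machine overflow by the maximum rounded processing time $\le \epsilon^{i\cdot K+t+1}$ of jobs in $J^{(>i)}$, and use \autoref{lem:relevant-machine-lower-bound} together with the lower bound $p_{\min}(B^{(t)}_k) \ge \epsilon^{i\cdot K+t}$ for $k\le i$ to certify that every affected machine lies in $M^*_i$. The only cosmetic difference is the case split (you condition on whether the \textbf{else}-branch of \textsc{ScheduleSmall} fires, the paper on whether every relevant machine is fully utilized at the end) and that you charge to the ``flip'' machines, a subset of the paper's $M^{\phi}_i$, which yields the same bound.
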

\begin{proof}
Consider the utilization of relevant machines in $M^{\text{rel}}_i$ in the schedule $\hat{\sigma}^*_i$.
If all the machines in $M^{\text{rel}}_i$ are fully-utilized, then we have
$$\sum_{q \in M^{\text{rel}}_i} T_{\hat{\Psi}}(\hat{\sigma}^*_i, q) \; = \; \sum_{q \in M^{\text{rel}}_i} \; \sum_{j \in \hat{\sigma}^{*-1}_i(q)} \hat{p}_j \;\; \le \;\; \sum_{q \in M^{\text{rel}}_i} T_{\hat{\Psi}}(\hat{\sigma}^*_{i-1}, q),$$
where the last inequality follows from the fact that $\hat{\sigma}^*_i$ and $\hat{\sigma}^*_{i-1}$ schedule the same set of jobs on the relevant machines and also the definition of workload function.
Hence it follows that $T_{\hat{\Psi}}(\hat{\sigma}^*_i) \le T_{\hat{\Psi}}(\hat{\sigma}^*_{i-1})$ and we are done.

\smallskip
In the following, we assume that some $q \in M^{\text{rel}}_i$ is not fully-utilized in $\hat{\sigma}^*_i$.
To compare the working times of $\hat{\sigma}^*_i$ and $\hat{\sigma}^*_{i-1}$, we first compare that of $\hat{\sigma}'_i$ and $\hat{\sigma}^*_i$.

\smallskip
Consider any relevant machine $q \in M^{\text{rel}}_i$ and the following three cases regarding its utilizations in the two schedules.
\begin{itemize}
    \item 
        If $\sum_{j\in \hat{\sigma}^{* -1}_{i}(q)} \hat{p}_j < c_q$, i.e., $q$ is not fully-utilized in $\hat{\sigma}^*_i$, then so is it in $\hat{\sigma}'_i$. It follows that 
        $$T_{\hat{\Psi}}(\hat{\sigma}^*_i, q) = T_{\hat{\Psi}}(\hat{\sigma}'_i, q).$$

    \smallskip
    
    \item 
        If $\sum_{j\in \hat{\sigma}'^{-1}_{i}(q)} \hat{p}_j \ge c_q$, then~\autoref{alg:ptas-proof-assign-small-jobs} schedules no further job on $q$ since at least one not-fully-utilized relevant machine exists during the process. Hence 
        $$T_{\hat{\Psi}}(\hat{\sigma}^*_i, q) = T_{\hat{\Psi}}(\hat{\sigma}'_i, q).$$

    \smallskip
    
    \item 
        Lastly, if $\sum_{j\in \hat{\sigma}^{* -1}_{i}(q)} \hat{p}_j \ge c_q$ and $\sum_{j\in \hat{\sigma}'^{-1}_{i}(q)} \hat{p}_j < c_q$, then the design of~\autoref{alg:ptas-proof-assign-small-jobs} ensures that 
        $$T_{\hat{\Psi}}(\hat{\sigma}^*_i, q) \; \le \; \sum_{j\in \hat{\sigma}^{* -1}_{i}(q)} \hat{p}_j \; \le \; c_q \; + \; p_{\max}(B^{(t)}_{i+1}) \; \le \; T_{\hat{\Psi}}(\hat{\sigma}'_i, q) + \epsilon^{i\cdot K + t + 1},$$ 
        where the last inequality follows from $T_{\hat{\Psi}}(\hat{\sigma}'_i, q) \ge c_q$ and $p_{\max}(B^{(t)}_{i+1}) \le \epsilon^{i\cdot K + t + 1}$.

\end{itemize}
\noindent
From the above three cases and~\autoref{lem:ptas-dp-optimality}, we have 
$$ T_{\hat{\Psi}}(\hat{\sigma}^*_i) \; \le \; T_{\hat{\Psi}}(\hat{\sigma}'_i) \; + \; \epsilon^{i\cdot K + t + 1} \cdot \left| M^{\phi}_i \right| \; \le \; T_{\hat{\Psi}}(\hat{\sigma}^*_{i-1}) \; + \; \epsilon^{i\cdot K + t + 1} \cdot \left| M^{\phi}_i \right|, $$
where $M^{\phi}_i := \left\{ \; q \in M^{\text{rel}}_i \; \colon \; \sum_{j\in \hat{\sigma}'^{-1}_{i}(q)} \hat{p}_j < c_q \; \right\}$.
To finish the proof for this lemma, it remains to show that 
$$M^{\phi}_i \subseteq M^*_i.$$
To prove this statement, consider any $q \in M^{\phi}_i$ and we have two cases.
If $c_q \ge \epsilon^{i\cdot K +t}$, then $T_{\hat{\Psi}}(\hat{\sigma}^*, q) \ge c_q \ge \epsilon^{i\cdot K +t}$ and $q \in M^*_i$ in this case.

\smallskip
Now suppose that $c_q < \epsilon^{i\cdot K +t}$.
Since $p_{\min}(B^{(t)}_k) \ge p_{\min}(B^{(t)}_i) \ge \epsilon^{i\cdot K +t}$ for all $0\le k\le i$,
it follows that $\hat{\sigma}'_i$ schedules no job in $\bigcup_{0\le k\le i} B^{(t)}_k$ on $q$.
By Lemma~\ref{lem:relevant-machine-lower-bound}, $\hat{\sigma}^*$ must schedule some job in $\bigcup_{0\le k\le i} B^{(t)}_k$ on $q$, and 
$$T_{\hat{\Psi}}(\hat{\sigma}^*, q) \; \ge \; \hat{p}_{\min}(B^{(t)}_i) \; = \; p_{\min}(B^{(t)}_i) \; \ge \; \epsilon^{i\cdot K+t},$$
where the equality follows from the second property of the rounded-down procedure.
This gives that $q \in M^*_i$ and this lemma follows.
\end{proof}
Now we are ready to prove~\autoref{lem:ptas_main_alg_approx}.
\begin{proof}[Proof of~\autoref{lem:ptas_main_alg_approx}]
Define $M_0 \; := \; \left\{ \; q \in M \; \colon \; \epsilon^{t} \; \le \; T_{\hat{\Psi}}(\hat{\sigma}^*, q) \; \right\}$ and
$$M_i \; := \; \left\{ \; q \in M \; \colon \; \epsilon^{i\cdot K + t} \; \le \; T_{\hat{\Psi}}(\hat{\sigma}^*, q) \; < \; \epsilon^{(i-1)\cdot K + t} \; \right\}$$
for any $i \ge 1$.
Recall that, for any $i \ge 0$, we have
$$M^*_i \; := \; \left\{ \; q \in M \; \colon \; T_{\hat{\Psi}}(\hat{\sigma}^*, q) \; \ge \; \epsilon^{i\cdot K + t} \; \right\}.$$
By~\autoref{lem:ptas-proof-intermediate-schedule-bound}, we have
\begin{align*}
T_{\hat{\Psi}}( \hat{\sigma} ) \;\; = \;\; T_{\hat{\Psi}}( \hat{\sigma}_\infty ) \;\; 
\le & \;\; T_{\hat{\Psi}}(\hat{\sigma}^*) \; + \; \sum_{i \ge 0} \; \epsilon^{i\cdot K + t + 1} \cdot \left| M^*_i \right| \\[3pt]
= & \;\; T_{\hat{\Psi}}(\hat{\sigma}^*) \; + \; \sum_{i \ge 0} \; \; \left| M_i \right|  \; \cdot \; \sum_{i' \ge i} \epsilon^{i'\cdot K + t + 1},
\end{align*}
where the last equality follows from the definitions of $M_i$ and $M^*_i$.
Hence, 
\begin{align*}
T_{\hat{\Psi}}( \hat{\sigma} ) \;\; 
\le & \;\;\;  T_{\hat{\Psi}}(\hat{\sigma}^*) \;\; + \;\; \sum_{i \ge 0} \; \; \left| M_i \right|  \cdot \epsilon^{i\cdot K + t} \cdot \sum_{i' \ge 0} \epsilon^{i'\cdot K + 1} \\[3pt]
\le & \;\;\; T_{\hat{\Psi}}(\hat{\sigma}^*) \;\; + \;\; \frac{\epsilon}{ \; 1-\epsilon^K \; } \cdot \sum_{i \in M} T_{\hat{\Psi}}(\hat{\sigma}^*, i)
\;\; \leq \;\; \frac{1+\epsilon}{ \; 1-\epsilon^K \; } \cdot T_{\hat{\Psi}}(\hat{\sigma}^*).
\end{align*}
This proves the lemma.
\end{proof}
\smallskip
Consider the input instance $\Psi = (J,M,p,c)$.
Recall that $\sigma^*$ is an optimal schedule for $\Psi$ and $\sigma$ is the schedule~\autoref{alg:ptas-new} computes for $\Psi$.
We have 
\begin{align}
T_{\Psi}(\sigma) \; \le \; T_{\Psi}(\hat{\sigma}) \; + \; \sum_{j \in G_t} p_j \; 
\le & \;\; T_{\Psi}(\hat{\sigma}) \; + \; \epsilon \cdot \sum_{j \in J} p_j  \notag \\[2pt]
\le & \;\; T_{\Psi}(\hat{\sigma}) \; + \; \epsilon \cdot T_{\Psi}(\sigma^*)
\label{eq-ptas-final-1}
\end{align}
by the way $t$ is chosen and that $T_\Psi(\sigma^*) \ge \sum_{j \in J} p_j$.
For $T_{\Psi}(\hat{\sigma})$, we have 
\begin{align*}
T_{\Psi}(\hat{\sigma}) \;\; = \;\; 
\sum_{i \in M} \max\left\{ \; \sum_{j \in \hat{\sigma}^{-1}(i)}p_j , \;\; c_i \; \right\} \;\; 
\leq & \;\; \sum_{i \in M} \max\left\{ \; \sum_{j \in \hat{\sigma}^{-1}(i)}(1+\epsilon) \cdot \hat{p}_j , \;\; c_i \; \right\}  \\[6pt]
\leq & \;\; (1+\epsilon) \cdot T_{\hat{\Psi}}(\hat{\sigma}),
\end{align*}
where the first inequality follows from the first property of the rounded-down processing time function $\hat{p}$.
Apply~\autoref{lem:ptas_main_alg_approx} and we have 
\begin{align}
T_{\Psi}(\hat{\sigma}) \; \leq \; (1+\epsilon) \cdot \frac{1+\epsilon}{ \; 1-\epsilon^K \; } \cdot T_{\hat{\Psi}}(\hat{\sigma}^*) \; 
\le & \;\; \frac{(1+\epsilon)^2}{ \; 1-\epsilon^K \; } \cdot T_\Psi(\hat{\sigma}^*) \;
\le \; \frac{(1+\epsilon)^2}{ \; 1-\epsilon^K \; } \cdot T_\Psi(\sigma^*),
\label{eq-ptas-final-2}
\end{align}
where in the second last inequality we use the fact that $\hat{p}_j \le p_j$ for all $j \in J$ and the last inequality follows from the fact that $T_\Psi(\hat{\sigma}^*) \le T_\Psi(\sigma^*)$.
By~(\ref{eq-ptas-final-1}) and~(\ref{eq-ptas-final-2}), we obtain that
$$ T_{\Psi}(\sigma) \; \le \; \left( \; \frac{(1+\epsilon)^2}{1-\epsilon^K} \; + \; \epsilon \; \right) \cdot T_{\Psi}(\sigma^*) \; \le \; 
(1+5\epsilon) \cdot T_{\Psi}(\sigma^*)$$
where the last inequality follows from the setting that $0 < \epsilon \le 1/2$. 
This finishes the proof for~\autoref{theo:ptas}.
\section{Towards an Asymptotic L/2-Approximation for Non-Preemptive Mixed-Criticality Scheduling} \label{sec:multi-case}

In this section we consider the non-preemptive mixed-criticality match-up scheduling problem. 
Let $\Psi = (L, {\mathcal B}, h, p)$ be an instance of this problem.
We show that the PTAS presented in the previous section for the BAF job scheduling problem can be applied to yield an asymptotic $(L(1+\epsilon))/2$-approximation for this problem.

\smallskip
Define $S_\ell:=\{B\in\mathcal{B} \colon h(B)=\ell\}$, where $1\le \ell \le L$, to be the set of F-shaped jobs with a height of $\ell$.
We group every two consecutive sets by defining $P_i = (S_{2i-1}, S_{2i})$ for any $i$ with $1\le i \le \lfloor L/2 \rfloor$.
For any such $i$, we define an instance $\Pi_i = (J^{(i)}, M^{(i)}, p^{(i)}, c^{(i)})$ for BAF scheduling, where
\begin{equation*}
    J^{(i)} := S_{2i-1}, \quad
    M^{(i)} := S_{2i},  \quad
    p^{(i)}_B := p(B, 2i-1) \text{ for all $B \in S_{2i-1}$},
\end{equation*}
and $c^{(i)}_B := p(B,2i) - p(B,2i-1)$ for all $B \in S_{2i}$.
Then the procedure PTAS-Approx$(\Pi_i, \epsilon)$ from~\autoref{alg:ptas-new} is applied to obtain a schedule $\sigma_i$ for $\Pi_i$.
Intuitively, we treat each $P_i$ as a two-layer instance and apply the PTAS algorithm on it.
Provided $\sigma_i$ for all $1\le i \le \lfloor L/2 \rfloor$, we define an assignment function $\hat{\sigma} \colon {\mathcal B} \to {\mathcal B} \cup \{\phi\}$ as 
$$
\hat{\sigma}(B) \; := \; \begin{cases}
    \; \phi, & \text{if $B \in S_{2i}$ for some $1\le i \le \lfloor L/2 \rfloor$,} \\[2pt]
    \; \sigma_i(B), & \text{if $B \in S_{2i-1}$ for some $1\le i \le \lfloor L/2 \rfloor$,} \\[2pt]
    \; \phi, & \text{if $L$ is odd and $B \in S_L$,}
\end{cases}
\quad \text{for any $B \in {\mathcal B}$.}
$$
To output a schedule $\sigma$ for $\Psi$, the algorithm iterate over all $B \in \hat{\sigma}^{-1}(\phi)$ in order.
For each job $B$ considered, let $t_B$ denote the job completion time of the current schedule $\sigma$.
The algorithm first schedules $B$ at $t_B$ and then all jobs in $\hat{\sigma}^{-1}(B)$ in any order, starting from $t_B + p(B,h(B)-1)$.
See also Figure~\ref{fig-mixed-crit-L2-1} for an illustration.

\begin{figure*}[h]
\centering
\includegraphics[scale=0.9]{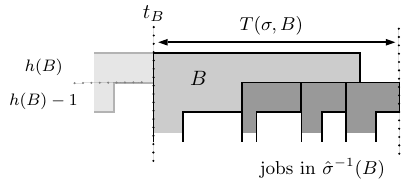}
\caption{An illustration on the schedule $\sigma$ obtained from $\hat{\sigma}$.}
\label{fig-mixed-crit-L2-1}
\end{figure*}

\noindent
It is clear that the job completion time of $\sigma$ is $T(\sigma) \; \le \; \sum_{B \in \hat{\sigma}^{-1}(\phi)} T(\sigma,B)$, where
$$T(\sigma,B) \; := \; \max\left\{ \; p(B,h(B)), \;\;\; p(B,h(B)-1)) + \; \sum_{B' \in \hat{\sigma}^{-1}(B)} p(B',h(B')) \; \right\}.$$
We conclude this section with the following theorem.
\begin{theorem}
    We can compute a $\lceil L/2 \rceil (1+O(\epsilon))$-approximation for the input instance $\Psi = ({\mathcal B}, h, p)$ with and $L := \max_{B \in \mathcal{B}}h(B)$ in time $L|\mathcal{B}|^{O(\frac{1}{\epsilon}\log_{1+\epsilon}{\frac{1}{\epsilon}})}$.
\end{theorem}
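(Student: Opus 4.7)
The plan is to verify that the constructed $\sigma$ is feasible and within the claimed runtime, express $T(\sigma)$ in closed form via the BAF sub-instance costs $T_{\Pi_i}(\sigma_i)$, and then carefully bound each BAF sub-instance optimum by $T(\sigma^*)$. The runtime bound follows directly from~\autoref{theo:ptas}, which is invoked at most $\lceil L/2 \rceil$ times on instances of total size at most $|\mathcal{B}|$. Feasibility of $\sigma$ holds by construction: within each $B'$-block, the jobs in $\hat{\sigma}^{-1}(B') \subseteq S_{2i-1}$ begin at offset $p(B',2i-1)$ from $B'$, which together with the height gap prevents overlap between $B'$ and those jobs; distinct blocks are placed contiguously in time, so no conflict arises across blocks either. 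A direct computation then shows that the length of each $B'$-block for $B' \in S_{2i}$ equals $p(B',2i-1) + T_{\Pi_i}(\sigma_i, B')$, which yields
\[
T(\sigma) \;\le\; \sum_{i=1}^{\lfloor L/2 \rfloor} \bigl[\,T_{\Pi_i}(\sigma_i) + A_i\,\bigr] \;+\; \mathbf{1}[L \text{ odd}] \cdot \sum_{B \in S_L} p(B, L),
\]
where $A_i := \sum_{B' \in S_{2i}} p(B', 2i-1)$.

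The main obstacle is to prove the key lemma: if $\sigma_i^{\mathrm{opt}}$ denotes the BAF optimum for $\Pi_i$, then $T_{\Pi_i}(\sigma_i^{\mathrm{opt}}) + A_i \le T(\sigma^*)$. A naive bound $T_{\Pi_i}(\sigma_i^{\mathrm{opt}}) \le T(\sigma^*)$ alone would lose a constant factor against the separate charge $A_i$, so I would produce an explicit BAF-feasible assignment $\tilde{\sigma}_i$ derived from $\sigma^*$ whose total cost is at most $T(\sigma^*) - A_i$. Order $S_{2i} = \{B'_1, \dots, B'_K\}$ by $\sigma^*$-start time and partition the time axis into slots $[\sigma^*(B'_k), \sigma^*(B'_{k+1}))$ (merging the initial segment $[0, \sigma^*(B'_1))$ into the first slot); the assignment sends each $B \in S_{2i-1}$ to the $B'_k$ whose slot contains $\sigma^*(B)$. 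The non-overlap at layer $2i-1$ in $\sigma^*$ forces the assigned $B$'s to fit inside $[\sigma^*(B'_k)+p(B'_k,2i-1),\,\sigma^*(B'_{k+1})]$, so $\mathrm{load}(B'_k) \le \sigma^*(B'_{k+1}) - \sigma^*(B'_k) - p(B'_k, 2i-1)$; and the non-overlap at layer $2i$ (since $B'_k$'s layer-$2i$ extension must end before $B'_{k+1}$'s start) gives the same upper bound for $c^{(i)}_{B'_k}$. Hence the per-machine cost $\max\{c^{(i)}_{B'_k}, \mathrm{load}(B'_k)\}$ admits the common bound, and summing telescopes to $T(\sigma^*) - A_i$.

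With the lemma in hand, combining with~\autoref{theo:ptas} gives, for each $i$, $T_{\Pi_i}(\sigma_i) + A_i \le (1+5\epsilon)(T(\sigma^*) - A_i) + A_i \le (1+5\epsilon)\,T(\sigma^*)$. The tail $\sum_{B \in S_L} p(B,L)$ (when $L$ is odd) is bounded by $T(\sigma^*)$ via the layer-$L$ non-overlap in $\sigma^*$. Summing over the $\lfloor L/2 \rfloor$ pairs and adding the tail yields $T(\sigma) \le \lceil L/2 \rceil\,(1+5\epsilon)\,T(\sigma^*)$, establishing the claimed $\lceil L/2 \rceil(1+O(\epsilon))$-approximation in the stated runtime.
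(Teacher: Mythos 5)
Your proposal is correct and follows the same high-level route as the paper: charge each block to a BAF sub-instance, bound each sub-instance by $T(\sigma^*)$, and sum. However, the paper's own proof is little more than a two-sentence sketch --- it asserts that the completion time of $S_{2i-1}\cup S_{2i}$ under $\sigma^*$ lower-bounds $T(\sigma^*)$ and then says ``Hence, by Theorem 4.3,'' without ever explaining how to turn the restriction of $\sigma^*$ to the height-$(2i-1)$ and height-$2i$ jobs into a \emph{feasible BAF assignment} for $\Pi_i$ with cost at most $T(\sigma^*)-A_i$. Your key lemma and its slot-based telescoping construction (ordering $S_{2i}$ by start time, assigning each $B\in S_{2i-1}$ to the $B'_k$ whose slot contains $\sigma^*(B)$, and using the layer-$(2i-1)$ and layer-$2i$ disjointness to bound both the load and the residual capacity of each machine by the slot length minus $p(B'_k,2i-1)$) is exactly the argument the paper leaves implicit, and it is what makes the additive charge $A_i$ absorb cleanly into the $T(\sigma^*)$ bound rather than costing an extra factor of $2$. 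The closed-form identity $T(\sigma,B)=p(B,2i-1)+T_{\Pi_i}(\sigma_i,B)$ and the handling of the odd-$L$ tail via layer-$L$ disjointness are likewise correct. In short: you rediscovered the paper's intended argument and supplied the proof it omits.
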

\begin{proof}
Let $\sigma^*$ be an optimal schedule of instance $\Psi$.
It is clear that, for any $1\le i \le \lfloor L/2 \rfloor$, the completion time of the jobs in $S_{2i-1}$ and $S_{2i}$ in $\sigma^*$ is a lower bound on $T(\sigma^*)$.
Hence, by~\autoref{theo:ptas}, we have $\sum_{B \in S_{2i}} T(\sigma,B) \leq (1+O(\epsilon))\cdot T(\sigma^*)$.
Sum over all such $i$ and $\sum_{B \in S_{L}} T(\sigma,B)$ if $L$ is odd, we obtain the claimed approximation guarantee.
\end{proof}
\begin{appendix}

\section{Proofs of Technical Lemmas}

\label{sec-appendix-proof}

%

%%%%%%%%%
%%%%%%%%%

%

\subsection{Proof of~\autoref{lem:set_B_underfit}}

\label{appendix-sec-lemsetBunderfit}

\lemsetBunderfit*

\begin{proof}
Let $i$ be a machine in $M_b$ with $T(\sigma, i) \ge 2 \cdot c_i$.
Then it follows that $\lvert \sigma^{-1}(i) \rvert \ge 2$.
Let $j$ be the last job~\autoref{alg:greedy} schedules on $i$.
By the design of the algorithm, we have $p_{j'} \ge p_{j}$ for all $j' \in \sigma^{-1}(i) \setminus \{j\}$.

\smallskip

Consider the workload $\sum_{j' \in \sigma^{-1}(i)\setminus \left\{j\right\}} p_{j'}$.
If $\sum_{ j' \in \sigma^{-1}(i)\setminus \left\{j\right\}} p_{j'} < c_i$, then
$$p_{j} \; = \; T(\sigma, i) \; - \sum_{ j' \in \sigma^{-1}(i)\setminus \left\{j\right\}} p_{j'} \; > \; 2 \cdot c_i \; - \; c_i \; = \; c_i,$$
which contradicts the property that $p_{j'} \ge p_{j}$ for all $j' \in \sigma^{-1}(i) \setminus \{j\}$.
Hence it must be the case that $\sum_{j' \in \sigma^{-1}(i)\setminus \left\{j\right\}} p_{j'} \ge c_i$.

\smallskip

This means that $j$ must be scheduled by Step~\ref{algo-greedy-last-step} of the~\autoref{alg:greedy}, and $\sum_{ j' \in \sigma^{-1}(i')} p_{j'} \ge  c_{i'}$ for all $i' \in M$.
This gives that 
$$ T(\sigma) \; = \; \sum_{i' \in M} \max\left\{ \sum_{ j' \in \sigma^{-1}(i')} p_{j'}, \; c_{i'} \right\} \; = \; \sum_{j \in J} p_j \; \le \; T(\sigma^*).$$
Since $\sigma^*$ is an optimal schedule, this lemma follows.
\end{proof}

%

%%%%%%%%%%%
%%%%%%%%%%%

%

\subsection{Proof of~\autoref{lem:greedy_set_A}}

\label{appendix-sec-lemgreedysetA}

\lemgreedysetA*

\begin{proof}
Let $i_1, i_2, \ldots, i_m$ be the machines in $M$ indexed according to the order they are considered during the execution of~\autoref{alg:greedy}.
To prove this lemma, we consider the machines in $M_a$ one by one according to the order of their indexes and swap the assignments made in $\sigma'$ step by step, if necessary, to obtain the claimed schedule $\sigma''$ with the stated properties.
Let $\hat{\sigma}$ denote the working schedule during the process, which is initially set to $\sigma'$.
For each machine, say, $i_t \in M_a$ considered according to the above order, we will define a new schedule $\hat{\sigma}'$ such that the following three conditions hold.
\begin{enumerate}
    \item 
        $T(\hat{\sigma}') \; \le \; T(\hat{\sigma})$.

    \item 
        $\hat{\sigma}'^{-1}(i_t) = \sigma^{-1}(i_t)$.
        
    \item 
        $\hat{\sigma}'^{-1}(i) = \hat{\sigma}^{-1}(i)$, for all $i \in M_a$ with indexes strictly smaller than $t$.
\end{enumerate}
Then $\hat{\sigma}$ is replaced with $\hat{\sigma}'$ in the next iteration.
Note that, repeating this argument proves this lemma.
\medskip
Let $i_t \in M_a$ be the machine to be considered in the current iteration and $j_t := \sigma^{-1}(i_t)$ denote the job that is scheduled on $i_t$ in $\sigma$.
Depending on whether or not $j_t \in \hat{\sigma}^{-1}(i_t)$, consider the following two cases.
\begin{itemize}
    \item 
        Case (1) : $j_t \in \hat{\sigma}^{-1}(i_t)$. 

        \smallskip

        For this case, we pick $i'$ for the jobs in $\hat{\sigma}^{-1}(i_t) \setminus \{j_t\}$ as follows. 
        If $M_b \neq \emptyset$, then set $i'$ to be an arbitrary machine in $M_b$.
        Otherwise, there must exist a machine in $M_a$ with a larger index according to the invariant condition we maintain.
        Set $i'$ to be this machine.
        \smallskip
        Obtain a new schedule $\hat{\sigma}'$ by setting for $j\in J$
        $$\hat{\sigma}'(j) \; := \begin{cases}
            \;\; i',  & \text{if $j \in \hat{\sigma}^{-1}(i_t) \setminus \{j_t\}$}, \\[1pt]
            \;\; \hat{\sigma}(j), & \text{otherwise}.
        \end{cases} $$
        Since $i_t \in M_a$, we have $T(\hat{\sigma}', i_t) = p_{j_t} = T(\sigma, i_t) \ge c_{i_t}$.
        It follows that the workload decreased at $i_t$ is exactly $\sum_{ j \in \hat{\sigma}^{-1}(i_t) \setminus \{j_t\}} p_j$, namely,
        $$T(\hat{\sigma}, i_t) \; - \; T(\hat{\sigma}', i_t) \;\; = \;\;  \sum_{ j \in \hat{\sigma}^{-1}(i_t) \setminus \{j_t\}} p_j.$$
        On the other hand, 
        $$T(\hat{\sigma}', i') \; - \; T(\hat{\sigma}, i') \;\; \le \;\;  \sum_{ j \in \hat{\sigma}^{-1}(i_t) \setminus \{j_t\}} p_j$$ by the definition of $T$.
        Hence $T(\hat{\sigma}') \le T(\hat{\sigma})$.
        It is straightforward to verify that $\hat{\sigma}'$ satisfies the remaining two requirements.

    \medskip
    
    \item 
        Case (2) : $j_t \notin \hat{\sigma}^{-1}(i_t)$.

        \smallskip
        
        In this case, let $t'$ be the index of the machine on which $\hat{\sigma}$ schedules $j_t$, i.e., $t'$ is the index such that $\hat{\sigma}(j_t) = i_{t'}$.
        We further consider two subcases.
        
        \smallskip
        
        Case (2a) : If $t' > t$, then we have $c_{i_t} \ge c_{i_{t'}}$.
        In this case, we swap $j_t$ and all jobs in $\hat{\sigma}^{-1}(i_{t})$.
        In particular, define a new schedule $\hat{\sigma}'$ by setting
        $$\hat{\sigma}'(j) \; := \begin{cases}
            \;\; i_t,  & \text{if $j = j_t$}, \\[1pt]
            \;\; i_{t'},  & \text{if $j \in \hat{\sigma}^{-1}(i_t)$},\\[1pt]
            \;\; \hat{\sigma}(j), & \text{otherwise},
        \end{cases} \qquad \text{for any $j \in J$}. $$
        Since $\hat{\sigma}$ and $\hat{\sigma}'$ differ only by the assignments they made to $i_t$ and $i_{t'}$, we focus on comparing the working times of these two machines in $\hat{\sigma}$ and $\hat{\sigma}'$. 
        First, for the schedule $\hat{\sigma}$, we have
        \begin{align*}
        T(\hat{\sigma}, i_t) \; + \; T(\hat{\sigma}, i_{t'}) \;\; 
        & = \; \;  \max\left\{ \; \sum_{j\in \hat{\sigma}^{-1}(i_t)}p_j, \;\; c_{i_t} \; \right\} \; + \; \max\left\{ \; \sum_{j\in \hat{\sigma}^{-1}(i_{t'})}p_j, \;\; c_{i_{t'}} \; \right\}  \\[2pt]
        & = \; \;  \max\left\{ \; \sum_{j\in \hat{\sigma}^{-1}(i_t)}p_j, \;\; c_{i_t} \; \right\} \; + \; \sum_{j\in \hat\sigma^{-1}(i_{t'})}p_j  \\[2pt]
        &= \; \; \max\left\{ \; \sum_{j\in \hat{\sigma}^{-1}(i_t)}p_j \; + \; \sum_{j\in \hat\sigma^{-1}(i_{t'})}p_j \;, \;\; c_{i_t} + \; \sum_{j\in \hat\sigma^{-1}(i_{t'})}p_j \; \right\},
        \end{align*}
        where the second equality follows from the fact that $i_t \in M_a$ and hence $p_{j_t} \ge c_{i_t} \ge c_{i_{t'}}$.
        On the other hand, for the new schedule $\hat{\sigma}'$, 
        \begin{align*}
        T(\hat{\sigma}', i_t) \; + \; T(\hat{\sigma}', i_{t'}) \;\; 
        & = \;\; p_{j_t} \; + \; \max \left\{ \; \sum_{j\in \hat{\sigma}^{-1}(i_t)}p_j \; + \; \sum_{j\in \hat\sigma^{-1}(i_{t'})\setminus \left\{j_t\right\}}p_j, \; \; c_{i_{t'}} \; \right\} \\[3pt]
        & = \;\; \max \left\{ \; \sum_{j\in \hat{\sigma}^{-1}(i_t)}p_j \; + \; \sum_{j\in \hat\sigma^{-1}(i_{t'})}p_j, \; \; c_{i_{t'}} \; + \; p_{j_t} \; \right\}. 
        \end{align*}
        Since $c_{i_{t'}} + p_{j_t}\le c_{i_{t}} + \sum_{j\in \hat\sigma^{-1}(i_{t'})}p_j $ always holds, we have 
        $$T(\hat{\sigma}', i_t) \; + \; T(\hat{\sigma}', i_{t'})  \;\; \le \;\; T(\hat{\sigma}, i_t) \; + \; T(\hat{\sigma}, i_{t'})$$ 
        and $T(\hat{\sigma}') \le T(\hat{\sigma})$.
        The remaining two requirements for $\hat{\sigma}'$ are straightforward.
        \bigskip
        Case (2b) : If $t' < t$, then we have $c_{i_{t'}} \ge c_{i_t}$.
        Consider the set of jobs which~\autoref{alg:greedy} has already scheduled before it schedules $j_t$.
        Let $J_t$ denote the set of these jobs.
        Note that, it follows that $p_j \ge p_{j_t}$ for all $j \in J_t$.
        We have the last two subcases to consider.
        \begin{itemize}
            \item 
                Case (2b-i) : For some $j'' \in J_t$, $\hat{\sigma}$ schedules $j''$ on a machine $i_{t''}$ with $t'' \ge t$.
                \smallskip
                For this case, we swap the assignments made for $j''$ and $j_t$.
                Let $\hat{\sigma}''$ be the schedule defined by 
                $$\hat{\sigma}''(j) \; := \begin{cases}
                \;\; i_{t'},  & \text{if $j = j''$}, \\[1pt]
                \;\; i_{t''},  & \text{if $j = j_t$},\\[1pt]
                \;\; \hat{\sigma}(j), & \text{otherwise},
                \end{cases} \quad \text{for any $j \in J$}. $$
                The argument for proving $T(\hat{\sigma}'') \le T(\hat{\sigma})$ is analogous to that used for the case~(2a).
                For $\hat{\sigma}$, we have
                \begin{align*}
                & T(\hat{\sigma}, i_{t'}) \; + \; T(\hat{\sigma}, i_{t''}) \;\; \\[3pt]
                & = \; \;  \max\left\{ \; \sum_{j\in \hat{\sigma}^{-1}(i_{t'})\setminus \left\{j_t\right\}}p_j \; + \; p_{j_t}, \;\; c_{i_{t'}} \; \right\} \; + \; \sum_{j\in \hat{\sigma}^{-1}(i_{t''})\setminus \left\{j''\right\}} p_j \; + \; p_{j''}.
                \end{align*}
                For $\hat{\sigma}''$, we have
                \begin{align*}
                & T(\hat{\sigma}'', i_{t'}) \; + \; T(\hat{\sigma}'', i_{t''}) \;\; \\[3pt]
                &= \; \; \max\left\{ \; \sum_{j\in \hat{\sigma}^{-1}(i_{t'})\setminus \left\{j_t\right\}}p_j \; + \; p_{j''}, \;\; c_{i_{t'}} \; \right\} \; + \; \sum_{j\in \hat{\sigma}^{-1}(i_{t''})\setminus \left\{j''\right\}} p_j \; + \; p_{j_t}.
                \end{align*}
                Since $p_{j_t} \le p_{j''}$, rewriting the above as was done in case~(2a) gives that 
                \begin{align*} 
                & \max\left\{ \; \sum_{j\in \hat{\sigma}^{-1}(i_{t'}) \cup \hat{\sigma}^{-1}(i_{t''})} p_j, \;\;\;\; c_{i_{t'}} + \sum_{j\in \hat{\sigma}^{-1}(i_{t''})\setminus \left\{j''\right\}} p_j \; + \; p_{j_t} \; \right\} \\[6pt]
                & \hspace{1.2cm} 
                \le \;\; \max\left\{ \; \sum_{j\in \hat{\sigma}^{-1}(i_{t'}) \cup \hat{\sigma}^{-1}(i_{t''})} p_j, \;\;\;\; c_{i_{t'}} + \sum_{j\in \hat{\sigma}^{-1}(i_{t''})\setminus \left\{j''\right\}} p_j \; + \; p_{j''} \; \right\},
                \end{align*}
                and $ T(\hat{\sigma}'', i_{t'}) + T(\hat{\sigma}'', i_{t''}) \; \le \; T(\hat{\sigma}, i_{t'}) + T(\hat{\sigma}, i_{t''})$.
                
                \smallskip
                
                We have proved that there exists $\hat{\sigma}''$ such that $T(\hat{\sigma}'') \le T(\hat{\sigma})$ and $\hat{\sigma}''^{-1}(i) = \hat{\sigma}^{-1}(i)$ for all $i \in M_a$ that has an index smaller than $t$.
                Then we are in a situation that satisfies the same pre-condition either as in case~(1) or as in case~(2a), and the same argument reapplies for the existence of $\hat{\sigma}'$ with the stated requirements.
                
                In particular, if $t'' = t$, then the argument in case~(1) applies. Otherwise, we apply the argument from case~(2a).

            \medskip

            \item
                Case (2b-ii) : $\hat{\sigma}$ schedules all the jobs in $J_t$ on machines with indexes smaller than $t$. 
                Consider the behavior of~\autoref{alg:greedy}.
                Since the algorithm schedules $j_t$ on $i_t$, all the machines with indexes smaller than $t$ are fully-utilized after assigning all the jobs in $J_t$.
                For $i_t$, it was empty before $j_t$ was scheduled and becomes fully-utilized after $j_t$ was assigned.
                Hence we have 
                \begin{equation}
                \sum_{1\le k \le t} T(\sigma, i_k) \; = \; \sum_{j \in J_t \cup \{ j_t\}} p_j.  \label{eq-greedy-2bii-1}
                \end{equation}
                For this case, we reschedule in $\hat{\sigma}$ the assignments made for all the jobs in $J_t \cup \{j_t\}$ to align with the assignments made in $\sigma$.
                In particular, define 
                $$\hat{\sigma}''(j) \; := \begin{cases}
                \;\; \sigma(j),  & \text{if $j \in J_t \cup \{j_t\}$}, \\[1pt]
                \;\; \hat{\sigma}(j), & \text{otherwise},
                \end{cases} \enskip\quad \text{for any $j \in J$}. $$
                Let 
                $$J_s \; := \; \bigcup_{1 \le k \le t} \; \hat{\sigma}^{-1}(i_k) \setminus \left\{J_t \cup \left\{j_t\right\} \right\}$$ 
                be the set of jobs containing all jobs other than $J_t \cup \left\{j_t\right\}$ in the first $t$ machines.
                Consider the working time of $i_1, i_2, \ldots, i_t$ in $\hat{\sigma}$ and $\hat{\sigma}''$. 
                For $\hat{\sigma}$, we have
                \begin{align}
                \sum_{1 \le k \le t} T(\hat{\sigma}, i_k) \;\; 
                & = \;\; \sum_{1 \le k \le t} \max \left\{ \sum_{j\in \hat{\sigma}^{-1}(i_k)}p_j, \; \; c_{i_k} \; \right\}  \notag \\[6pt]
                & \ge \;\; \sum_{1 \le k \le t} \; \sum_{j\in \hat{\sigma}^{-1}(i_k)} p_j \;\; = \;\; \sum_{j\in J_t \cup \left\{j_t\right\}} p_j \; + \; \sum_{j\in J_s} p_j,  
                \label{ieq-greedy-2bii-2}
                \end{align}
                where the first inequality follows from the fact that $\max \left\{ \sum_{j\in \hat{\sigma}^{-1}(i_k)}p_j, \; c_{i_k} \right\} \ge \sum_{j\in \hat{\sigma}^{-1}(i_k)} p_j$ for all $k$.
                \smallskip
                As for $\hat{\sigma}''$, we have
                \begin{align}
                \sum_{1 \le k \le t} T(\hat{\sigma}'', i_k) \;\; 
                = \; \sum_{1 \le k \le t} \max \left\{ \sum_{j\in \hat{\sigma}''^{-1}(i_k)}p_j, \; \; c_{i_k} \; \right\} \; = \; \sum_{j\in J_t \cup \left\{j_t\right\}} p_j \; + \; \sum_{j\in J_s} p_j  
                \label{ieq-greedy-2bii-3}
                \end{align}
                where the second equality follows from the definition of $\hat{\sigma}''$ and~(\ref{eq-greedy-2bii-1}).
                By~(\ref{ieq-greedy-2bii-2}) and~(\ref{ieq-greedy-2bii-3}), we have $T(\hat{\sigma}'') \le T(\hat{\sigma})$.
                Now we are in a situation that satisfies the pre-condition of case~(1) and the same argument reapplies.
                
        \end{itemize}
\end{itemize}
\noindent
This proves the lemma.
\end{proof}

\subsection{Proof of~\autoref{lem:dp} and~\autoref{lem:dp_function}}  
\label{sec:approx-schemes-procedures}

\lemmadp*
\begin{proof}
Let $a_1, a_2, \ldots, a_k$ be the processing times of the jobs in $J$ and for any $1\le i \le k$, let $n_i$ be the number of jobs in $J$ that has a processing time of $a_i$, namely, 
$$n_i \; = \; \left|\left\{ \; j \in J \; \colon \; p_j = a_i \; \right\}\right|.$$
Then, a subset $q$ of jobs can be represented by a $k$-tuple $q=(q_1, \cdots, q_k)$ where $q_i \leq n_i$ for all $1 \le i \le k$.
Let $Q := \left\{ \; (q_1, \cdots, q_k) \; \colon \; 0\le q_i\leq n_i \text{ for all $1\le i\le k$} \; \right\}$ be the set of all possible $k$-tuples.
Since $n = \sum_{1\le i\le k} n_i$, it follows that $|Q| \le \binom{n+k}{k} = O(n^k)$.

\smallskip

Label the machines arbitrarily with indexes from $1$ to $m$.
For any $1\le i \le m$ and any $q \in Q$, 
define $W(i,q)$ to be the minimum sum of working times of the first $i$ machines if the job set $q$ is to be scheduled on them.
We have the following recurrence formula for $W(i,q)$ by the optimal substructure.
\begin{equation*}
W(i,q) \;\; = 
\hspace{-4pt}
\min_{ \substack{ \\[1pt]  q' \in Q, \\[3pt]   0\leq q'_\ell \leq q_\ell \text{ for all } 1\le \ell \leq k } } 
\hspace{-12pt} 
W(i-1, q-q') \; + \; \max \left\{ \; \sum_{1\le \ell \le k} q'_\ell \cdot a_\ell, \;\; c_i \; \right\}
\end{equation*}
for any $i \ge 1$ and $q \in Q$,
where we define the boundary values as
\begin{equation*}
W(0,q) \; = \; \begin{cases}
    \;\; 0, & \text{if $q = (0,0,\ldots,0)$,} \\[2pt]
    \;\; \infty, & \text{otherwise}.
\end{cases}
\end{equation*}
It is clear that $W(m, (n_1, n_2, \ldots, n_k))$ gives an optimal solution for $\Pi$ and it takes $m \cdot n^{O(k)}$ time to compute it.
\end{proof}
%

%

%%%
%%%

\lemdpfunction*

\begin{proof}
By the three property of the round-down procedure {\sc RD}$(J,\epsilon)$ described in the beginning of~\autoref{sec:approx-schemes}, 
we know that 
there are at most $k_J$ different types of job processing times in $\hat{p}$.
Hence, by~\autoref{lem:dp}, the procedure~{\sc DP} computes an optimal schedule $\sigma$ for the instance $\hat{\Pi}=(J,M, \hat{p}, c)$ in time $m \cdot n^{O(k_J)}$.
We have
\begin{align*}
T_{\Pi}(\sigma) \;\; = \;\; \sum_{i \in M} \; \max\left\{ \; \sum_{j \in \sigma^{-1}(i)}p_j , \;\; c_i \; \right\} \; 
\leq & \;\; \sum_{i \in M} \; \max\left\{ \; \sum_{j \in \sigma^{-1}(i)}(1+\epsilon) \cdot \hat{p}_j , \;\; c_i \; \right\}  \\[8pt]
\leq \;\; (1+\epsilon) \cdot T_{\hat{\Pi}}(\sigma) \;\; 
\le & \;\; (1 + \epsilon) \cdot T_{\hat{\Pi}}(\sigma^*) \;\; \le \;\; (1 + \epsilon) \cdot T_{\Pi}( \sigma^* ),
\end{align*} 
where $\sigma^*$ denotes an optimal schedule for the input instance $\Pi=(J,M, p, c)$, the second last inequality follows from the optimality of $\sigma$ for $\hat{\Pi}$, and the last inequality follows from the first property of the round-down procedure {\sc DP}. 
\end{proof}
%

%%%%%%%%%%%
%%%%%%%%%%%

\subsection{Proof of~\autoref{lem:ptas-dp-optimality}}

\label{appendix-sec-lemptasdpoptimality}

\lemptasdpoptimality*

\begin{proof} 
As $\hat{\sigma}'_i(j)=\hat{\sigma}^*_{i-1}(j)$ for all $j \in J \setminus \left( \; B^{(t)}_i \cup J^{(> i)}\; \right)$, it suffices to compare the assignments $\hat{\sigma}'_i(j)$ and $\hat{\sigma}^*_{i-1}(j)$ make on the job set $B^{(t)}_i \cup J^{(> i)}$ to the machines in $M^{\text{rel}}_i$.

\smallskip
Recall that $\sigma_i$, and also $\hat{\sigma}_i$, are optimal solutions for the instance $\hat{\Pi}_i := (B^{(t)}_i, M, \hat{p}, \hat{c}^{(i)})$ and $\hat{\sigma}'_i(j)=\hat{\sigma}_i(j)$ $\forall j\in B^{(t)}_i$. 
As $\hat{\sigma}_i$ schedules $B^{(t)}_i$ only to the relevant machines in $M^{\text{rel}}_i$, it follows that $\hat{\sigma}_i$, and also $\hat{\sigma}'_i$, are optimal solutions for the instance $(B^{(t)}_i, M^{\text{rel}}_i, \hat{p}, \hat{c}^{(i)})$.
Since $\hat{\sigma}^*_{i-1}$ is a feasible schedule for $(B^{(t)}_i \cup J^{(> i)}, M^{\text{rel}}_i, \hat{p}, \hat{c}^{(i)})$, it follows that $T_{\hat{\Psi}}(\hat{\sigma}'_i) \; \le \; T_{\hat{\Psi}}(\hat{\sigma}^*_{i-1})$.
\end{proof}
%

%%%%%%%%%%
%%%%%%%%%%

\subsection{Proof of~\autoref{lem:relevant-machine-lower-bound}}

\label{appendix-sec-lemrelevantmachinelowerbound}

\lemrelevantmachinelowerbound*

\begin{proof}
Suppose that $\hat{\sigma}'_i(j) \neq q$ and $\hat{\sigma}^*(j) \neq q$ for all $j \in \bigcup_{0\le k\le i} B^{(t)}_k$.
Then $q \notin M^{\text{rel}}_0$ and by the design of~\autoref{alg:ptas-proof-assign-small-jobs}, no job in $j \in \bigcup_{0\le k\le i} B^{(t)}_k$ can be assigned to $q$ in $\hat{\sigma}^*_0$.
This implies that $q \notin M^{\text{rel}}_1$.
Repeating the same argument gives that $q \notin M^{\text{rel}}_k$ for all $0\le k\le i$, which is a contradiction to the assumption that $q \in M^{\text{rel}}_i$.
\end{proof}

\end{appendix}

\end{document}